\newtheorem{theorem}{Theorem}
\newtheorem{remark}[theorem]{Remark}
\newtheorem{definition}[theorem]{Definition}
\newtheorem{corollary}[theorem]{Corollary}
\newtheorem{lemma}[theorem]{Lemma}
\def\MM#1{\boldsymbol{#1}}
\numberwithin{equation}{section}
\title{Variational Principles for Stochastic Soliton Dynamics}
\author{Darryl D Holm and Tomasz M Tyranowski
\\Mathematics Department
\\Imperial College London}
\date{26 January 2016\\ $\,$\\ \small
Keywords: Geometric mechanics; cylindrical stochastic processes; \\
stochastic soliton dynamics; symmetry reduced variational principles}                                           
\begin{document}
\maketitle
\makeatother

\begin{abstract}
We develop a variational method of deriving stochastic partial differential equations whose solutions follow the flow of a stochastic vector field. As an example in one spatial dimension we numerically simulate singular solutions (peakons) of the stochastically perturbed Camassa-Holm (CH) equation derived using this method. These numerical simulations show that peakon soliton solutions of the stochastically perturbed CH equation persist and provide an interesting laboratory for investigating the sensitivity and accuracy of adding stochasticity to finite dimensional solutions of stochastic partial differential equations (SPDE). In particular, some choices of stochastic perturbations of the peakon dynamics by Wiener noise (canonical Hamiltonian stochastic deformations, or CH-SD) allow peakons to interpenetrate and exchange order on the real line in overtaking collisions, although this behaviour does not occur for other choices of stochastic perturbations which preserve the Euler-Poincar\'e structure of the CH equation (parametric stochastic deformations, or P-SD), and it also does not occur for peakon solutions of the unperturbed deterministic CH equation. The discussion raises issues about the science of stochastic deformations of finite-dimensional approximations of evolutionary PDE and the sensitivity of the resulting solutions to the choices made in stochastic modelling.
\end{abstract}

%
\newpage

\section{Introduction}\label{intro}
Two main approaches have arisen recently for implementing variational principles in stochastic geometric mechanics. In one of them, sometimes called \emph{stochastic deformation} \cite{ArChCr2014}, the Lagrangian in Hamilton's principle is the classical one, but it is evaluated on underlying stochastic processes and their mean derivatives. This perspective was initially motivated by the quantization of classical systems \cite{ChZa2003,Ya1980,Ya1983} and a probabilistic version of Feynman's path integral approach to quantum mechanics. The stochastic deformation approach brings to bear the full theory of stochastic partial differential equations (SPDE). For more details about the history of the applications of this approach, in particular for fluid dynamics, see \cite{ArChCr2014,Ho2015}.

Here we will advocate a simpler, and more restricted approach. Our approach is based on a generalisation in \cite{Ho2015} of earlier work by Bismut \cite{Bi1981}, L\'azaro-Cam\'i and Ortega \cite{LaCa-Or2008}, and Bou-Rabee and Owhadi \cite{BR-O2009} for stochastic ordinary differential equations (SDE), which unifies their Hamiltonian and Lagrangian approaches to temporal stochastic dynamics, and extends them to stochastic partial differential equations (SPDE) in the case of cylindrical noise in which the spatial dependence is \emph{parametric}, while temporal dependence is stochastic. The advantage of this approach is that the parametric spatial dependence of the stochastic dynamical variables in the resulting SPDEs allows essentially finite-dimensional stochastic methods to be applied at each point of space. This feature allows us to safely assume from the onset that all the objects we introduce in this context are semimartingales. To distinguish between the two approaches, we will call the approach taken here \emph{parametric stochastic deformation} (P-SD). 

Parametric stochastic partial differential equations (P-SPDE) result in the present approach by applying P-SD to a deterministic variational principle. These P-SPDE contain a type of multiplicative, cylindrical, Stratonovich noise that depends on both the solution variables and their \emph{spatial gradients}. This unfamiliar feature does not interfere with the passage to the It\^o representation, though, since the space variable is treated merely as a parameter when dealing with cylindrical noise. That is, one may regard the cylindrical noise process as a finite dimensional stochastic process parametrized by $\MM x$ (the space variable).  Then, the Stratonovich equation makes analytical sense pointwise, for each fixed $\MM x$.  Once this is agreed, then the transformation to It\^o by the standard method also makes sense pointwise in space. For more details about P-SPDE explained in a fluid dynamics context, see \cite{Ho2015}.

In this paper we develop an approach for inserting parametric stochastic deformation with cylindrical noise into systems of evolutionary partial differential equations which derive from deterministic variational principles that are invariant under a Lie group action. The corresponding deterministic dynamical systems are called Euler-Poincar\'e equations. The set of Euler-Poincar\'e equations includes the equations of ideal fluid dynamics, which follow from variational principles whose Lagrangians satisfy certain invariance properties under smooth invertible maps (diffeomorphisms) \cite{MaRa1994,HoMaRa1998}. 

\paragraph{Objectives.} This paper has two main objectives. The first objective is the inclusion of parametric stochastic deformation (P-SD) in the variational principle for the EPDiff partial differential equation.%
\footnote{EPDiff is the PDE which arises when the Lagrangian in Hamilton's principle is a functional of continuous Eulerian vector fields, whose flows are smooth invertible maps (diffeomorphisms). EPDiff is the Euler-Poincar\'e equation arising for Lagrangians which are invariant under the diffeomorphism group. When the Lagrangian is chosen to be the $H^1$ norm of the vector fields,  the EPDiff equation becomes the Camassa-Holm equation \cite{CaHo1993}, which is a completely integrable Hamiltonian system and possesses singular soliton solutions known for their peaked shape as \emph{peakons}.} The second objective is the numerical study of the statistical effects of parametric and canonically Hamiltonian stochastic deformations (CH-SD) on the soliton-like solutions of EPDiff which arise in the deterministic case in one spatial dimension, when the Lagrangian in Hamilton's principle is a Sobolev norm on the continuous vector fields. When the $H^1$ norm is chosen this results in the CH equation \cite{CaHo1993}. We study the stochastic generalised Camassa-Holm equation because its singular momentum map \cite{HoMa2004} persists under the stochastic deformations we introduce here and thereby allows its solutions to be investigated on a finite dimensional invariant manifold. 

\subsection{Stochastic EPDiff variational principle}

The action integral for the stochastic variational principle we shall study for EPDiff is a \emph{stochastically constrained} variational principle $\delta S = 0$, with action integral, $S$, given by
\begin{align}
S(u,p,q)  &=
\int \bigg( \ell(u) dt 
+ \left\langle  p\,,\,{dq} + \widetilde{\pounds}_{\widetilde u} q\,\right\rangle_V \bigg)
\,,\label{SVP1}
\end{align}
where $\ell(u)$ is the unperturbed deterministic Lagrangian, written as a functional of velocity vector field $u\in\mathfrak{X}(\mathbb{R}^3)$. The angle brackets  
\begin{equation}
\langle \,p\,,\,q\,\rangle_V:=\int <p(x,t),q(x,t)>dx
\label{L2pairing}
\end{equation}
denote the spatial $L^2$ integral  over the domain of flow of the pairing $<p\,,\,q>$ between stochastic dynamical variables $q$, which take values in a tensor space $V$, and their dual elements $p$ taking values in $V^*$. In \eqref{SVP1}, the quantity $p\in V^*$ is a Lagrange multiplier and $\widetilde{\pounds}_{\widetilde u}q$ is the \emph{stochastic Lie differential} of the dynamical variable $q\in V$ with respect to a stochastic vector field $\widetilde u(x,t)$ which is defined by the following sum of a drift velocity $u(x,t)$ and \emph{Stratonovich} stochastic process with \emph{cylindrical noise} parameterised by spatial position $x$, \cite{Pa2007,Sc1988} 
\begin{equation}
\widetilde u(x,t) = u(x,t)\,dt - \sum_i \xi_i(x)\circ dW_i(t)
\,.\label{vel-vdt}
\end{equation}
We give a precise definition of the stochastic Lie differential $\widetilde{\pounds}_{\widetilde u} q$ in formula \eqref{eq: Stochastic Lie differential definition} in Section~\ref{sec: Stratonovich stochastic EPDiff equation}. One may interpret equation \eqref{vel-vdt} as the decomposition of a vector field $\widetilde u(x,t)$ defined at spatial position $x$ and time $t$ into a time-dependent drift velocity $u(x,t)$ and a stochastic vector field with cylindrical noise (note that for notational convenience we define $\widetilde u$ as a stochastic differential with respect to the time variable; we will use the tilde to denote objects defined as stochastic differentials throughout the rest of this work). The time-independent vector fields $\xi_i(x)$ with $i=1,2,\dots,M$ in the cylindrical stochastic process are usually interpreted as ``diffusivities" of the  stochastic vector field, and the choice of these $M$ quantities must somehow be specified from the physics of the problem to be considered. In the present considerations, a natural choice will arise from the singular momentum map admitted by the deterministic EPDiff equation \cite{HoMa2004}.

The $L^2$ pairing $\left\langle\,\cdot\,,\,\cdot\,\right\rangle_V$ in the Stochastic Variational Principle (SVP) $\delta S=0$ for action functional $S$ in \eqref{SVP1} with  Lagrange multiplier $p\in T^*V$ enforces the \emph{advection condition} that the quantity $q\in V$ is preserved along the Stratonovich stochastic integral curves of the vector field \eqref{vel-vdt} for any tensor space  $V$. Namely, $q$ satisfies the advection condition 
\begin{equation}
{dq} + \widetilde{\pounds}_{\widetilde u} q = 0
\,.\label{advec-cond}
\end{equation}
The advection condition \eqref{advec-cond} for the quantities $q\in V$ may be regarded as a \emph{stochastic constraint} imposed on the variational principle \eqref{SVP1} via the Lagrange multiplier $p$.

%
%
%

At this point, we have introduced parametric Stratonovich stochasticity into the variational principle \eqref{SVP1}  with dynamical variables $(u,q,p)$ through the constraint that the advected quantities $q\in V$ should evolve by following the Stratonovich stochastic vector field $\widetilde u(x,t)$ in equation \eqref{vel-vdt}. This advection law is formulated as a Lie differential with respect to the Stratonovich stochastic vector field acting on a tensor space.
We use the Stratonovich formulation, so the normal rules of calculus apply. 
For mathematical discussions of Lie derivatives with respect to stochastic vector fields, see, e.g., \cite{IkWa1981, KaSh1994}.

\paragraph{Plan of the paper.}
After setting out the general theory of Euler-Poincar\'e evolutionary P-SPDE in $(x,t)\in \mathbb{R}^3\times \mathbb{R}$ in the remainder of this introductory section, we will specialise to one spatial dimension for $(x,t)\in \mathbb{R}\times \mathbb{R}$ and study the corresponding soliton-like solution behaviour for the case when the Lagrangian $\ell(u)$ in the variational principle \eqref{SVP1} is chosen as a Sobolev norm of the vector field $u$.

The objective of the remainder of the paper is to use the Stratonovich stochastic EPDiff Theorem proved below to study the effects of introducing this type of stochasticity on the interactions of the peakon solutions of the CH equation with parametric stochastic deformation (P-SD). The P-SD of the equations of motion for the singular solutions of stochastic EPDiff in one spatial dimension will be introduced via Hamilton's principle in Section \ref{sec: ParametricStochasticDeformations} and their effects on the numerical solutions will be studied thereafter in comparison with a more general canonical Hamiltonian stochastic deformation (CH-SD) in the sense of \cite{Bi1981,LaCa-Or2008}, which includes P-SD but can be more general. 

The Fokker-Planck equations for the probability density evolution associated with P-SD and CH-SD of the EPDiff equation will be discussed in Section \ref{sec: The Fokker-Planck equation}. In Section~\ref{sec: Stochastic variational integrator} we discuss the numerical algorithm we use, and in Section~\ref{sec: Numerical experiments} we present the results of our numerical studies, including sample paths and mean solutions, the probability distribution for crossing of singular solutions on the real line, statistics of the \emph{first crossing time}, effects of noise screening, comparison with other types of noise such as additive noise in the canonical momentum equation and the results of convergence tests for our stochastic variational integrator. Section \ref{sec: Summary and future work} is devoted to a brief summary of results and discussion of some open problems for possible future work.

\subsection{Stratonovich stochastic EPDiff equation}
\label{sec: Stratonovich stochastic EPDiff equation}
If the drift velocity vector field $u(x,t)$ and the diffusivity vector fields $\xi_i(x)$ satisfy some standard measurability and regularity conditions, then the stochastic vector field \eqref{vel-vdt} possesses a pathwise unique stochastic flow $F_{t,s}(x)$. By definition, this flow almost surely satisfies $F_{s,s}(x)=x$ and the integral equation

\begin{equation}
\label{eq: Integral definition of the stochastic flow}
F_{t_2,s}(x)-F_{t_1,s}(x) = \int_{t_1}^{t_2} u\big( F_{t,s}(x), t \big)\,dt - \sum_i \int_{t_1}^{t_2} \xi_i\big( F_{t,s}(x) \big)\circ dW_i(t),
\end{equation}

\noindent
or shortly, in the differential form,

\begin{equation}
\label{eq: Stochastic differential of the stochastic flow}
dF_{t,s}(x) = u\big( F_{t,s}(x), t \big)\,dt - \sum_i \xi_i\big( F_{t,s}(x) \big)\circ dW_i(t).
\end{equation}

\noindent
It can be proved that for fixed $t,s$ this flow is mean-square differentiable with respect to the $x$ argument, and also almost surely is a diffeomorphism (see \cite{ArnoldSDE}, \cite{IkWa1981}, \cite{KloedenPlatenSDE}, \cite{Kunita}). These properties allow us to generalise the differential-geometric notion of the Lie derivative, which we do in the following definition and theorem.

\begin{definition}[Stochastic Lie differential]
\label{thm: Stochastic Lie differential}
Let $q$ be a smooth tensor field. The stochastic Lie differential $\widetilde{\pounds}_{\widetilde u} q$ is the almost surely unique stochastic differential satisfying
\begin{equation}
\label{eq: Stochastic Lie differential definition}
d(F_{t,s}^* q) := F_{t,s}^* \widetilde{\pounds}_{\widetilde u} q.
\end{equation}
\end{definition}

\begin{theorem}
\label{thm: Stochastic Lie differential formula}
The stochastic Lie differential $\widetilde{\pounds}_{\widetilde u} q$ is almost surely unique and given by

\begin{equation}
\label{eq: Stochastic Lie differential formula}
\widetilde{\pounds}_{\widetilde u} q = \pounds_u q \, dt - \sum_i \pounds_{\xi_i}q \circ dW_i(t),
\end{equation}

\noindent
where $\pounds_u$ and $\pounds_{\xi_i}$ are the standard Lie derivatives.

\begin{proof}
The proof is a straightforward generalisation of the standard differential-geometric construction of the Lie derivative of tensor fields (see \cite{AbrahamMarsdenRatiuMTA}). Whenever necessary, we replace time-differentiation with stochastic differentials and use the weak property \eqref{eq: Stochastic differential of the stochastic flow} of the flow. We first prove \eqref{eq: Stochastic Lie differential formula} when $q$ is a smooth, real-valued function. Then $(F_{t,s}^* q)(x) =q\circ F_{t,s}(x)$. Formula \eqref{eq: Stochastic Lie differential formula} is proved by calculating the stochastic differential $d(F_{t,s}^* q)$ using the rules of Stratonovich calculus. Next we consider the case when $q$ is a smooth vector field. Let $G_\lambda$ be the smooth flow of $q$. Then for fixed $t,s$ the flow $H_\lambda$ of the vector field $F_{t,s}^* q$ satisfies $G_\lambda \circ F_{t,s} = F_{t,s}\circ H_\lambda$. From the mean-square differentiability of $F_{t,s}$ we have mean-square differentiability of both sides with respect to $\lambda$. Differentiating both sides with respect to $\lambda$, evaluating at $\lambda=0$, calculating the stochastic differential with respect to $t$ and comparing terms, we obtain formula \eqref{eq: Stochastic Lie differential formula} for vector fields. For the case when $q$ is a differential one-form we use the property $F_{t,s}^*\langle q, v \rangle = \langle F_{t,s}^*q, F_{t,s}^*v \rangle$, where $v$ is an arbitrary smooth vector field and $\langle \cdot,\cdot\rangle$ is the dual pairing between one-forms and vector fields. Calculating the stochastic differential of both sides and using our already established results for functions and vector fields, we prove \eqref{eq: Stochastic Lie differential formula} for differential one-forms. It is now straightforward to complete the proof for a general tensor field. Almost sure uniqueness of $\widetilde{\pounds}_{\widetilde u} q$ follows from our construction and pathwise uniqueness of the flow $F_{t,s}$. 
\end{proof}
\end{theorem}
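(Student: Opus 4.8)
The plan is to follow the classical differential-geometric construction of the Lie derivative, as in \cite{AbrahamMarsdenRatiuMTA}, building the formula up from functions to vector fields to one-forms and finally to general tensors, but replacing each ordinary time derivative by a Stratonovich stochastic differential. The whole argument rests on the fact that Stratonovich calculus obeys the usual chain and product rules, so that the manipulations of the deterministic proof carry over once we substitute the weak differential \eqref{eq: Stochastic differential of the stochastic flow} of the flow $F_{t,s}$ for the generating vector field. I would fix $s$ and work pathwise in $t$, treating the spatial variable $x$ as a parameter throughout. First, for a smooth function $q$, I would compute $d(F_{t,s}^*q)=d(q\circ F_{t,s})$ directly by the Stratonovich chain rule, inserting \eqref{eq: Stochastic differential of the stochastic flow}. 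Since the Lie derivative of a function along a vector field is its directional derivative, $\pounds_u q = Dq\cdot u$ and $\pounds_{\xi_i}q = Dq\cdot\xi_i$, this immediately yields $d(F_{t,s}^*q)=F_{t,s}^*\big(\pounds_u q\,dt - \sum_i \pounds_{\xi_i}q\circ dW_i\big)$, which establishes \eqref{eq: Stochastic Lie differential formula} in the scalar case.

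The main step, and the place I expect the real obstacle to lie, is the vector-field case. Here I would use the commutation relation $G_\lambda\circ F_{t,s}=F_{t,s}\circ H_\lambda$ between the flow $G_\lambda$ of $q$ and the flow $H_\lambda$ of the pulled-back field $F_{t,s}^*q$. Differentiating in $\lambda$ at $\lambda=0$ gives the pointwise identity $q\circ F_{t,s}=TF_{t,s}\cdot(F_{t,s}^*q)$, where $TF_{t,s}$ is the mean-square spatial Jacobian of the flow. Taking the stochastic differential in $t$ of this identity, using the Stratonovich product rule together with the induced evolution $d(TF_{t,s})=(D\widetilde u\circ F_{t,s})\,TF_{t,s}$ of the Jacobian, obtained by spatially differentiating \eqref{eq: Stochastic differential of the stochastic flow}, and then solving for $d(F_{t,s}^*q)$, the combination $Dq\cdot\widetilde u - D\widetilde u\cdot q$ appears on the right-hand side. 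This is precisely the Lie bracket $[\widetilde u,q]$, whose drift and noise parts are $\pounds_u q$ and $\pounds_{\xi_i}q$; after pulling the Jacobian back through, one recovers \eqref{eq: Stochastic Lie differential formula}. The delicate analytic point is the legitimacy of interchanging the $\lambda$-derivative with the stochastic $t$-differential: this is exactly where the asserted mean-square differentiability of $F_{t,s}$ in $x$, and the consequent existence and stochastic evolution of $TF_{t,s}$, are essential, and where one must check that differentiating the flow in the spatial direction introduces no spurious It\^o correction.

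With the scalar and vector-field cases in hand, the one-form and tensor cases follow from the Leibniz property. For a one-form $q$ I would take the stochastic differential of the pairing identity $F_{t,s}^*\langle q,v\rangle = \langle F_{t,s}^*q, F_{t,s}^*v\rangle$ for an arbitrary smooth vector field $v$: the left-hand side is handled by the scalar result, the term $d(F_{t,s}^*v)$ by the vector-field result, and comparing the $dt$ and $dW_i$ parts, using non-degeneracy of the pairing and arbitrariness of $v$, isolates $\widetilde{\pounds}_{\widetilde u}q$. Since a general tensor is a sum of tensor products of vector fields and one-forms, and both pullback and the Lie derivative act multiplicatively via Leibniz, the formula extends by the same term-by-term comparison of the drift and noise components.

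Finally, almost-sure uniqueness follows because the defining relation \eqref{eq: Stochastic Lie differential definition}, together with pathwise uniqueness of the flow $F_{t,s}$ and the uniqueness of the semimartingale decomposition into drift and martingale parts, pins down $\widetilde{\pounds}_{\widetilde u}q$ completely: any two candidate differentials would have identical pullbacks and hence coincide. In short, the construction simultaneously proves existence, via the explicit formula, and uniqueness, via pathwise uniqueness of $F_{t,s}$, with the only genuinely nontrivial ingredient being the commuting of spatial differentiation of the flow with the Stratonovich time-differential.
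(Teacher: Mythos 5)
Your proposal is correct and takes essentially the same route as the paper's proof: the scalar case via the Stratonovich chain rule, the vector-field case via the flow commutation relation $G_\lambda \circ F_{t,s} = F_{t,s}\circ H_\lambda$ differentiated at $\lambda=0$ and then in $t$, one-forms via the duality pairing $F_{t,s}^*\langle q,v\rangle = \langle F_{t,s}^*q, F_{t,s}^*v\rangle$, general tensors by Leibniz, and uniqueness from pathwise uniqueness of the flow. Your treatment of the vector-field step (making the Jacobian $TF_{t,s}$, its Stratonovich evolution equation, and the emergence of the bracket $[\widetilde u, q]$ explicit) is simply a more detailed rendering of the step the paper states in compressed form.
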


In order to choose the form of the spatial correlations, or \emph{diffusivities}, $\xi_i(x)$ of the cylindrical Stratonovich stochasticity in \eqref{vel-vdt}, we notice that the action integral for the variational principle in \eqref{SVP1} may be rearranged into the equivalent form
\begin{align}
S(u,p,q)  =
\int \bigg( \ell(u) dt 
+ \left\langle  p\,,\,{dq} + \widetilde{\pounds}_{\widetilde u} q\,\right\rangle_V \bigg)
=
\int \bigg( \ell(u) dt 
+ \left\langle  p\,,\,{dq}\right\rangle_V 
- \left\langle p\diamond q\,,\,{\widetilde u} \,\right\rangle_{\mathfrak{X}} \bigg)
\,,
\label{SVP2}
\end{align}
where we define the diamond operation $(\diamond)$ in the expression $p\diamond q\in \mathfrak{X}^*$ via the real-valued nondegenerate pairing $\langle\,\cdot\,,\,\cdot\,\rangle_{\mathfrak{X}}: \mathfrak{X}^*\times \mathfrak{X}\to\mathbb{R}$ between a vector field $\eta\in \mathfrak{X}$ and its dual under $L^2$ pairing $\mu\in \mathfrak{X}$ as
\begin{align}
\left\langle p\diamond q\,,\,\eta \,\right\rangle_{\mathfrak{X}}
:=
\left\langle  p\,,\,-\pounds_\eta q\right\rangle_V
\,.
\label{diamond-def}
\end{align}
The diamond operation $(\diamond)$ will be instrumental in deriving the Stratonovich form of the stochastic EPDiff equation from the stochastic variational principle for the action integral in \eqref{SVP1}, as stated in the following theorem.

\begin{theorem}[Stratonovich Stochastic EPDiff equation]\label{SEP-thm}\rm$\,$
The parametric Stratonovich stochastic deformation in the action $S(u,p,q)$ for the stochastic variational principle $\delta S = 0$ for EPDiff given by
\begin{align}
S(u,p,q)  &=
\underbrace{\
\int \bigg( \ell(u) - \left\langle p\diamond q\,,\,u(x,t) \,\right\rangle_{\mathfrak{X}} \bigg)dt
}_{\hbox{Lebesgue integral}}
+
\underbrace{\
\int \left\langle p(x,t)\,,\,dq(x,t) \,\right\rangle_V
}_{\hbox{Stratonovich integral wrt $q$}}
+
\underbrace{
\int \sum_{i}  \left\langle p\diamond q\,,\,\xi_i(x) \,\right\rangle_{\mathfrak{X}} \circ dW_i(t)
}_{\hbox{Stratonovich integral}}
\,,\label{SVP2-redux-FIRST}
\end{align}
yields the following Stratonovich form of the \emph{stochastic} EPDiff equation
\begin{align}
dm + \widetilde{\pounds}_{\widetilde u}m 
=
0\,.
\label{SEP-eqns-thm-FIRST}
\end{align}
The momentum density $m(x,t)$ and velocity vector field $u(x,t)$ in \eqref{SEP-eqns-thm-FIRST} are related by $m=\frac{\delta \ell}{\delta u}$ and the stochastic vector field $\widetilde u(x,t)$ is given by
\begin{align}
\widetilde u(x,t) = u(x,t)\,dt - \sum_{i} \xi_i(x) \circ dW_i(t)
\,.
\label{path-thm-FIRST}
\end{align} 
\end{theorem}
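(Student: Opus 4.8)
The plan is to treat the action \eqref{SVP2-redux-FIRST} as a Clebsch-constrained (Hamilton--Pontryagin) variational principle and to run the standard Euler--Poincar\'e reduction, exploiting the fact that the noise enters in \emph{Stratonovich} form so that the ordinary Leibniz and chain rules, together with integration by parts, remain valid for the stochastic differentials. Concretely, the deterministic derivation should carry over almost verbatim once one replaces $\partial_t(\,\cdot\,)\,dt$ by the stochastic differential $d(\,\cdot\,)$ and $\pounds_u\,dt$ by the stochastic Lie differential $\widetilde{\pounds}_{\widetilde u}$ of \eqref{eq: Stochastic Lie differential formula}. I would therefore take free and independent variations $\delta u$, $\delta p$ and $\delta q$ in $\delta S=0$, requiring only that $\delta q$ vanish at the temporal endpoints so that the boundary terms from the Stratonovich integral $\int\langle p, dq\rangle_V$ drop out.

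The three variations produce the three ingredients of the reduction. Since $u$ appears only in the drift (Lebesgue) integral, the variation $\delta u$ gives the Legendre-type relation $m:=\delta\ell/\delta u = p\diamond q$. The variation $\delta p$, after rewriting the $u$- and $\xi_i$-terms through the diamond definition \eqref{diamond-def} and re-collecting them into a stochastic Lie differential by \eqref{eq: Stochastic Lie differential formula}, recovers the advection constraint \eqref{advec-cond}, namely $dq + \widetilde{\pounds}_{\widetilde u}q = 0$. The variation $\delta q$ is the one requiring care: integrating the term $\int\langle p, d(\delta q)\rangle_V$ by parts in time and moving $\widetilde{\pounds}_{\widetilde u}$ onto $p$ by its $V$-pairing adjoint yields an auxiliary evolution law for the multiplier $p$, characterised by $\langle dp, w\rangle_V = \langle p, \widetilde{\pounds}_{\widetilde u}w\rangle_V$ for every test tensor $w$.

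The final step is the assembly. I would compute $dm = d(p\diamond q) = (dp)\diamond q + p\diamond(dq)$ by the Stratonovich product rule and substitute the two equations just obtained for $dp$ and $dq$. To identify the result, pair $dm + \widetilde{\pounds}_{\widetilde u}m$ against an arbitrary time-independent test vector field $\eta$: using that $\pounds_\eta$ commutes with $d$, the commutator identity $[\pounds_\eta,\widetilde{\pounds}_{\widetilde u}] = \widetilde{\pounds}_{[\eta,\widetilde u]}$, and the diamond definition \eqref{diamond-def}, the contribution of $dm$ becomes $\langle m, [\widetilde u,\eta]\rangle_{\mathfrak X}$, while the integration-by-parts identity for one-form densities, $\langle\widetilde{\pounds}_{\widetilde u}m,\eta\rangle_{\mathfrak X} = -\langle m, [\widetilde u,\eta]\rangle_{\mathfrak X}$, cancels it exactly, giving \eqref{SEP-eqns-thm-FIRST}. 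I expect the main obstacle to be structural rather than computational: one must check that each manipulation — the product rule, commuting $d$ with $\pounds_\eta$, the temporal integration by parts, and the coadjoint (``$\operatorname{ad}^*$'') identity for densities — is legitimate for the Stratonovich differentials, and that the underlying equivariance identity holds term-by-term for the drift $\pounds_u\,dt$ and for each diffusion $\pounds_{\xi_i}\circ dW_i$ appearing in \eqref{path-thm-FIRST}. Because $\widetilde{\pounds}_{\widetilde u}$ is linear in these pieces, it should suffice to establish the identity for a single smooth vector field and extend by linearity, which is precisely what the Stratonovich (ordinary-calculus) convention makes possible.
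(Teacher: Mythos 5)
Your proposal is correct and follows essentially the same route as the paper: the same three variational equations ($m=p\diamond q$, the advection law for $q$, and $dp=\widetilde{\pounds}_{\widetilde u}^{T}p$ stated in weak form), followed by the same assembly step, namely applying the Stratonovich product rule to $d(p\diamond q)$, pairing against an arbitrary test vector field $\eta$, and invoking the Lie-derivative commutator together with the identity that $\operatorname{ad}^*$ coincides with the Lie derivative on 1-form densities. The paper packages this assembly as Lemma \ref{Lemma-m-eqn} and, like you, justifies the termwise (drift plus each diffusion) manipulations by linearity of $\widetilde{\pounds}_{\widetilde u}$ under the Stratonovich convention.
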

\begin{proof}
The first step is to take the elementary variations of the action integral \eqref{SVP2}, to find
\begin{align}
\delta u:\quad
\frac{\delta \ell}{\delta u} - p\diamond q = 0
\,,\quad
\delta p:\quad
dq + \widetilde{\pounds}_{\widetilde u} q   = 0
\,,\quad
\delta q:\quad
- dp 
+ \widetilde{\pounds}_{\widetilde u}^Tp  = 0
\,.
\label{var-eqns-thm-FIRST}
\end{align}
The first equation in \eqref{var-eqns-thm-FIRST} follows from the definition of the diamond operation $(\diamond)$ in \eqref{diamond-def}.
The second and third equations immediately follow from variations of the equivalent form of the action $S(u,p,q)$ in equation \eqref{SVP2} and integrations by parts with vanishing endpoint and boundary conditions.
The governing equation for $m$ will be recovered by using the result of the following Lemma.
\begin{lemma}\rm\label{Lemma-m-eqn}
Together, the three equations in \eqref{var-eqns-thm-FIRST} imply \eqref{SEP-eqns-thm-FIRST}.

\begin{proof}
For an arbitrary $\eta\in \mathfrak{X}$, one computes the pairing
\begin{align}
\begin{split}
\left\langle 
dm  \,,\, \eta 
\right\rangle_{\mathfrak{X}}
&=   
\left\langle 
dp\diamond q + p\diamond dq\,,\, \eta 
\right\rangle_{\mathfrak{X}}
\\
\hbox{By equation \eqref{var-eqns-thm-FIRST} } &=   
\left\langle 
(\widetilde{\pounds}_{\widetilde u}^Tp) \diamond q 
- p\diamond \widetilde{\pounds}_{\widetilde u} q\,,\, \eta 
\right\rangle_{\mathfrak{X}}
\\&=   
\left\langle 
p\,,\, (-\widetilde{\pounds}_{\widetilde u} \pounds_{\eta} + \pounds_{\eta} \widetilde{\pounds}_{\widetilde u} )q\,
\right\rangle_{V}
\\&=   
\left\langle 
p\,,\, {\rm \widetilde{ad}}_{\widetilde u}{\eta}\,q\,
\right\rangle_{V}
=
-\left\langle 
p\diamond q\,,\, {\rm \widetilde{ad}}_{\widetilde u}{\eta}\,
\right\rangle_{\mathfrak{X}}
\\&=
-\left\langle 
 {\rm \widetilde{ad}}^*_{\widetilde u}(p\diamond q)\,,\,{\eta}\,
\right\rangle_{\mathfrak{X}}
=
-\,\Big\langle 
 \widetilde{\pounds}_{\widetilde u}m\,,\,{\eta}\,
\Big\rangle_{\mathfrak{X}}\,,
\end{split}
\label{calc-lem}
\end{align}
where ${\rm \widetilde{ad}}_{\widetilde u} \eta = -[\widetilde{\pounds}_{\widetilde u},\pounds_\eta] = {\rm ad}_{u} \eta \,dt - \sum_i {\rm ad}_{\xi_i} \eta \circ dW_i(t)$ is the stochastic adjoint action differential. Since $\eta\in \mathfrak{X}$ was arbitrary, the last line completes the proof of the Lemma. In the last step we have used the fact that coadjoint action $({\rm ad}^*)$ is identical to Lie-derivative action $(\pounds)$ for vector fields acting on 1-form densities. 
\end{proof}
\end{lemma}
The result of Lemma \ref{Lemma-m-eqn} now produces the $m$-equation in \eqref{SEP-eqns-thm-FIRST} of Theorem \ref{SEP-thm}. This completes the proof.
\end{proof}

\begin{remark}[Multiplicative noise in the 3D vector stochastic EPDiff equation]\rm$\,$\\
In 3D vector notation, the 1-form density $m$ is expressed as $m=\MM{m}\cdot {\rm d}\MM{x}\otimes {\rm d}^3x$ and equation \eqref{SEP-eqns-thm-FIRST} becomes
\begin{align}
d\MM{m} + \Big( \partial_j(\MM{m}u^j)  + m_j\MM{\nabla} u^j  \Big)dt
+ 
\sum_{i}\Big(\partial_j\big(\MM{m} \xi^j_i(x)\big)  
+ m_j\MM{\nabla} \xi^j_i(x) \Big)\circ dW_i(t)
= 0
\,,
\label{SEP-1D}
\end{align}
with $m={\delta \ell}/{\delta u}$.
Importantly, the noise terms in \eqref{SEP-1D} multiply both the solution and its gradient. The latter is not a common form for stochastic PDEs. In addition, both the spatial correlations $\xi_i(x)$ and their derivatives $\nabla\xi_i(x)$ are involved. The effects of these noise terms on the singular solutions of stochastic EPDiff in one spatial dimension will be treated in Section \ref{sec: ParametricStochasticDeformations} and its numerical solutions will be studied thereafter. 
\end{remark}

\subsection{It\^o version of the stochastic EPDiff equation}\label{Strat-Ito-subsec}

In the It\^o version of the stochastic EPDiff equation, noise terms have zero mean, but \textit{additional drift terms arise}. These drift terms are double Lie derivatives, which are diffusive, as shown in \cite{Ho2015} for stochastic fluid dynamics.

The corresponding It\^o forms of the stochastic EPDiff equation in \eqref{SEP-eqns-thm-FIRST} and the second and third equations  in \eqref{var-eqns-thm-FIRST} are found by using It\^o's formula to identify the quadratic covariation terms \cite{Sh1996} as
\begin{align}
\begin{split}
dm + \widehat {\pounds}_{\widetilde u}m 
&=
\frac12
\sum_{j} \pounds_{\xi_j(x)}\left(\pounds_{\xi_j(x)}m\right) \,dt
\,,\\
dq + \widehat {\pounds}_{\widetilde u}q 
&= 
\frac12
\sum_{j} \pounds_{\xi_j(x)}\left(\pounds_{\xi_j(x)}q\right)  \,dt
\,,\\
dp - \widehat {\pounds}_{\widetilde u}^Tp   
&= 
- \frac12
\sum_{j} \pounds_{\xi_j(x)}^T\left(\pounds_{\xi_j(x)}^Tp\right)  \, dt
\,,
\end{split}
\label{ItoEP-eqns}
\end{align}

\noindent
where $\widehat {\pounds}_{\widetilde u}q = \pounds_u q \, dt - \sum_i \pounds_{\xi_i}q \, dW_i(t)$ is an It\^o stochastic differential related to the stochastic Lie differential \eqref{eq: Stochastic Lie differential formula} (it should be noted that $\widehat {\pounds}_{\widetilde u}$ is \emph{not} a Lie differential) and we have used $[dW_i(t),dW_j(t)]=\delta_{ij}dt$ for Brownian motion to identify the quadratic covariation terms as drift terms. For more details about this sort of calculation in the geometric mechanics context, see \cite{Ho2015}. 

\subsection{Legendre transform to Stratonovich stochastic Lie-Poisson Hamilton equations}\label{LPform-sec}

\begin{theorem}[Lie-Poisson representation of Stratonovich stochastic EP equations]\label{LPform-thm}\rm$\,$\\
The Stratonovich stochastic EP system in \eqref{SEP-eqns-thm-FIRST} may  be written equivalently in terms of a standard semidirect product Lie-Poisson Hamiltonian structure \cite{MaRa1994} with a canonical Poisson bracket $\{q,p\}$, as
\begin{align}
\begin{bmatrix}
dm \\ dq \\ dp
\end{bmatrix}
&=
\begin{bmatrix}
-{\rm \widetilde{ad}}^*_{(\,\cdot\,)} m & -(\,\cdot\,)\diamond q & p\diamond(\,\cdot\,)
\\
-\,\widetilde{\pounds}_{(\,\cdot\,)}q & 0 & 1
\\
\widetilde{\pounds}_{(\,\cdot\,)}^Tp & -1 & 0
\end{bmatrix}
\begin{bmatrix}
\delta \widetilde{h} / \delta m \\ \delta \widetilde{h} / \delta q \\ \delta \widetilde{h} / \delta p
\end{bmatrix}
=:
\left\{
\begin{bmatrix}
m \\ q \\ p
\end{bmatrix}
,
\widetilde{h}
\right\}_{LP}
\,,
\label{SLPstructure-thm}
\end{align}
where $\widetilde{h}(m,q,p)$ is a stochastic differential representing the Legendre transform of the stochastic Lagrangian and $\{\,\cdot\,,\,\cdot\,\}_{LP}$ denotes the Lie--Poisson bracket.
\end{theorem}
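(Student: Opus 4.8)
The plan is to obtain the stochastic Hamiltonian $\widetilde h$ by a fibre Legendre transform of the drift Lagrangian $\ell(u)$ alone, and then to verify by direct substitution that the three rows of the Poisson operator in \eqref{SLPstructure-thm} reproduce the stochastic EPDiff equation \eqref{SEP-eqns-thm-FIRST} together with the advection and multiplier equations in \eqref{var-eqns-thm-FIRST}. First I would carry out the Legendre transform. Assuming $\ell$ is hyperregular, the relation $m=\delta\ell/\delta u$ from the $\delta u$--equation in \eqref{var-eqns-thm-FIRST} is invertible for $u$, and I set $h_0(m)=\langle m,u\rangle_{\mathfrak{X}}-\ell(u)$, so that $\delta h_0/\delta m = u$ in the usual Euler--Poincar\'e way. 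Imposing the constraint $m=p\diamond q$ also supplied by that equation, the velocity-- and noise--dependent terms of the action \eqref{SVP2-redux-FIRST} assemble into the stochastic Hamiltonian differential
\[
\widetilde h(m,q,p) = h_0(m)\,dt - \sum_i \langle m,\xi_i(x)\rangle_{\mathfrak{X}}\circ dW_i(t),
\]
which depends on $m$ alone. Its functional derivatives are therefore $\delta\widetilde h/\delta m = u\,dt - \sum_i \xi_i\circ dW_i = \widetilde u$ and $\delta\widetilde h/\delta q = \delta\widetilde h/\delta p = 0$.

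Next I would substitute these three derivatives into the right--hand side of \eqref{SLPstructure-thm}. Since $\widetilde h$ is independent of $q$ and $p$, only the first column of the Poisson matrix contributes, and the system collapses to $dm = -{\rm \widetilde{ad}}^*_{\widetilde u}\,m$, $dq = -\widetilde{\pounds}_{\widetilde u}q$ and $dp = \widetilde{\pounds}_{\widetilde u}^T p$. Invoking the identification of coadjoint action with Lie--derivative action on $1$--form densities established at the end of Lemma \ref{Lemma-m-eqn}, the first of these is precisely \eqref{SEP-eqns-thm-FIRST}, while the second and third are the $\delta p$-- and $\delta q$--equations of \eqref{var-eqns-thm-FIRST}. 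This establishes the equivalence of the dynamics.

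It then remains to confirm that the operator in \eqref{SLPstructure-thm} is a genuine Poisson bracket, namely that it is skew--symmetric and satisfies the Jacobi identity. Skew--symmetry reduces to three observations: the canonical $(q,p)$ block $\left(\begin{smallmatrix}0&1\\-1&0\end{smallmatrix}\right)$ is manifestly antisymmetric; the diagonal entry $-{\rm \widetilde{ad}}^*_{(\,\cdot\,)}m$ is skew by the Lie--Poisson property; and the off--diagonal pairs are negative transposes of one another precisely because the diamond operation is \emph{defined} in \eqref{diamond-def} as the adjoint of $\eta\mapsto-\pounds_\eta q$, so that the map $a\mapsto-a\diamond q$ in the $(m,q)$ slot and the map $\eta\mapsto-\widetilde{\pounds}_\eta q$ in the $(q,m)$ slot satisfy $B_{21}=-B_{12}^{T}$, and similarly for $p\diamond(\,\cdot\,)$ against $\widetilde{\pounds}_{(\,\cdot\,)}^{T}p$. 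For the Jacobi identity I would use that $\widetilde{\pounds}_{\widetilde u}$ and ${\rm \widetilde{ad}}_{\widetilde u}$ are $dt$-- and $\circ\,dW_i$--linear combinations of the ordinary Lie--derivative and ${\rm ad}$ operators, so that the bracket is a semimartingale--valued superposition of copies of the classical semidirect--product Lie--Poisson bracket coupled to the canonical $\{q,p\}$ bracket, whose Jacobi identity may be imported from the deterministic theory of \cite{MaRa1994}.

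The step I expect to be the main obstacle is this last one: making the term--by--term reduction of the Jacobi identity rigorous in the Stratonovich setting, where one must verify that the Stratonovich products, which obey the ordinary Leibniz and chain rules, generate no cross terms beyond those already present in the deterministic bracket. A useful consistency check along the way is that the constraint $m=p\diamond q$ supplied by the $\delta u$--equation is dynamically preserved: differentiating it and substituting the $q$-- and $p$--equations yields $d(p\diamond q) = -\widetilde{\pounds}_{\widetilde u}(p\diamond q)$, which is exactly the computation of Lemma \ref{Lemma-m-eqn}, so the Lie--Poisson flow is consistent with the reduction that recovers the single $m$--equation.
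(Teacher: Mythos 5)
Your proposal is correct and takes essentially the same route as the paper: form the stochastic Hamiltonian by Legendre transform, $\widetilde h = \langle m,u\rangle\,dt - \ell\,dt - \sum_i \langle m,\xi_i\rangle_{\mathfrak{X}}\circ dW_i(t)$, compute $\delta\widetilde h/\delta m = \widetilde u$ and $\delta\widetilde h/\delta p = 0$, and substitute into the Poisson matrix to recover \eqref{SEP-eqns-thm-FIRST} together with the advection and multiplier equations of \eqref{var-eqns-thm-FIRST}. The only differences are minor: the paper works slightly more generally with a $q$-dependent Lagrangian $\ell(u,q)$, so that $\delta\widetilde h/\delta q = -(\delta\ell/\delta q)\,dt$ appears and is then observed to vanish for EPDiff, while your skew-symmetry and Jacobi discussion and the $m = p\diamond q$ consistency check are extras the paper delegates to the standard semidirect-product references.
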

\begin{proof}
As usual, the Legendre transform of the stochastic Lagrangian determines the stochastic Hamiltonian and its variational derivatives. In a slight abuse of notation, we may write this Legendre transform as
\begin{align}
\begin{split}
\widetilde{h}(m,q,p)  &= \left\langle m\,,\, u  \right\rangle dt - \ell(u,q)dt 
- \sum_{i}  \left\langle m \,,\,\xi_i(x)
\right\rangle_{\mathfrak{X}}\circ dW_i(t)
\\
&=: H(m,q,p)dt 
- \sum_{i}  \left\langle m \,,\,\xi_i(x)
\right\rangle_{\mathfrak{X}}\circ dW_i(t)
\,,\end{split}
\label{Stochastic-Ham}
\end{align}
where now we allow $q$-dependence in the Lagrangian $\ell(u,q)$. Varying the stochastic Hamiltonian in \eqref{Stochastic-Ham} gives 
\begin{align}
\begin{split}
\delta \widetilde{h}(m,q,p)  
&=
\left\langle \delta m\,,\, 
\frac{\delta \widetilde{h}}{\delta m}  \right\rangle 
+
\left\langle  \delta u\,,\, 
m - \frac{\delta \ell}{\delta u}  \right\rangle dt 
+
\left\langle \frac{\delta \widetilde{h}}{\delta q} \,,\, \delta q \right\rangle 
+
\left\langle \frac{\delta \widetilde{h}}{\delta p} \,,\, \delta p \right\rangle 
\\ &= 
\left\langle \delta m\,,\, udt - \sum_i \xi_i(x)\circ dW_i(t)   \right\rangle  
+
\left\langle  \delta u\,,\, 
m - \frac{\delta \ell}{\delta u}  \right\rangle dt 
+ \left\langle -\, \frac{\delta \ell}{\delta q}\,,\, \delta q  \right\rangle dt
+ \left\langle 0 \,,\, \delta p  \right\rangle
\,.\end{split}
.\label{Stochastic-Ham-var-FIRST}
\end{align}
Consequently, the corresponding variational derivatives of the stochastic Hamiltonian are
\begin{align}
\frac{\delta \widetilde{h}}{\delta m} 
= udt - \sum_i \xi_i(x)\circ dW_i(t)
= \widetilde u
\,,\quad
\frac{\delta \widetilde{h}}{\delta q} = - \frac{\delta \ell}{\delta q}dt 
\quad\hbox{and}\quad
\frac{\delta \widetilde{h}}{\delta p} =  0
\,.\label{Stochastic-Ham-var}
\end{align}
The resulting Lie-Poisson Hamiltonian form of the system of Stratonovich stochastic variational  equations in \eqref{SEP-eqns-thm} is then given by
\begin{align}
\begin{bmatrix}
dm \\ dq \\ dp
\end{bmatrix}
&=
\begin{bmatrix}
-{\rm \widetilde{ad}}^*_{(\,\cdot\,)} m & -\,(\,\cdot\,)\diamond q & p\diamond(\,\cdot\,)
\\
-\widetilde{\pounds}_{(\,\cdot\,)}q & 0 & 1
\\
\widetilde{\pounds}_{(\,\cdot\,)}^Tp & -1 & 0
\end{bmatrix}
\begin{bmatrix}
\delta \widetilde{h} / \delta m \\ \delta \widetilde{h} / \delta q \\ \delta \widetilde{h} / \delta p
\end{bmatrix}
\label{SLPstructure}
=
\begin{bmatrix}
-{\rm \widetilde{ad}}^*_{\widetilde u} m - \frac{\delta \widetilde{h}}{\delta q}\diamond q
\\ 
-\widetilde{\pounds}_{\widetilde u}q 
\\ 
\widetilde{\pounds}_{\widetilde u}^Tp - \frac{\delta \widetilde{h}}{\delta q}
\end{bmatrix}.
\end{align}
Of course, the terms involving ${\delta \widetilde{h}}/{\delta q}$ vanish, when ${\delta \ell}/{\delta q}=0$, as for EPDiff.
\end{proof}

\begin{remark}\rm

The matrix operator in \eqref{SLPstructure} is the Hamiltonian operator for a standard semidirect product Lie-Poisson structure \cite{HoMaRa1998} with the canonical Poisson bracket (two-cocycle) $\{q,p\}$ between $q$ and $p$. Similar Lie-Poisson structures also appear in the Hamiltonian formulation of the dynamics of complex fluids \cite{GBRa2009}. 

Stochastic Lie-Poisson Hamiltonian dynamics has also been previously studied and developed in applications in spin dynamics in \cite{BrElHo2008, BrElHo2009} and in image registration in \cite{CoCoVi2012, HoRaTrYo2004, TrVi2012}. 

\emph{Canonical} stochastic Hamilton equations were introduced in Bismut \cite{Bi1981} and were recently developed further in the context of geometric mechanics in L\'azaro-Cam\'i and Ortega \cite{LaCa-Or2008}.
\end{remark}

\section{Stochastic variational perturbations in one spatial dimension}
\label{sec: ParametricStochasticDeformations}

\subsection{Singular peakon solutions of the EPDiff equations}

The EPDiff$(H^1)$ equation in the one-dimensional case when $\ell(u)=\frac12\|u\|^2_{H^1}=\frac12\int u^2 + \alpha^2u_x^2\, {\rm d} x$ is called the Camassa-Holm (CH) equation for $m=\delta \ell/\delta u = u - \alpha^2u_{xx}$ with positive constant $\alpha^2$; namely \cite{CaHo1993},
\begin{align}
m_t + (um)_x + mu_x = 0 
\quad\hbox{with}\quad
m = u - \alpha^2u_{xx}
\,.
\label{CH-eqn}
\end{align}
This equation has singular \emph{peakon} solutions, given by 
\begin{align}
m(x,t):=\frac{\delta \ell}{\delta u} = u - \alpha^2u_{xx} = \sum_{a=1}^N p_a(t) \delta (x - q_a(t)) 
\,,\quad\hbox{so that}\quad
u(x,t): = \sum_{b=1}^N p_b(t) K (x - q_b(t)), 
\label{m-def-peakon}
\end{align}
where $K (x - y)=\exp (-|x-y|/\alpha)$ is the Green's function for the Helmholtz operator $1-\alpha^2\partial_x^2$. The peaked shape of the velocity profile for each individual peakon solution of the CH equation $u(x,t): = p(t) \exp (-|x - q(t)|/\alpha)$ gives them their name. 

Peakons are emergent singular solutions which dominate the initial value problem, as shown in Figure \ref{CH-peakons}. An initially confined smooth velocity distribution will decompose into peakon solutions and, in fact, \emph{only} peakon solutions. Substituting the (weak) solution Ansatz \eqref{m-def-peakon} into the CH equation \eqref{CH-eqn} and integrating against a smooth test function yields the following dynamical equations for the $2N$ solution parameters $q_a(t)$ and $p_a(t)$
\begin{equation}
\frac{dq_a}{dt} = u(q_a(t),t) 
\quad\hbox{and}\quad
\frac{dp_a}{dt} = - \,p_a(t) \frac{\partial u(q_a(t),t)}{\partial q_a} 
\,.
\label{pkn-pq-eqns}
\end{equation}
The system of equations for the peakon parameters comprises a completely integrable canonical Hamiltonian system, whose solutions determine the positions $q_a(t)$ and amplitudes $p_a(t)$, for all $N$ solitons, $a=1,\dots,N$, and also describe the dynamics of their multi body interactions, as shown, for example, in Figure \ref{CH-peakons}. 

\begin{figure}
\begin{center}
\includegraphics[width=.5\textwidth,angle=0]{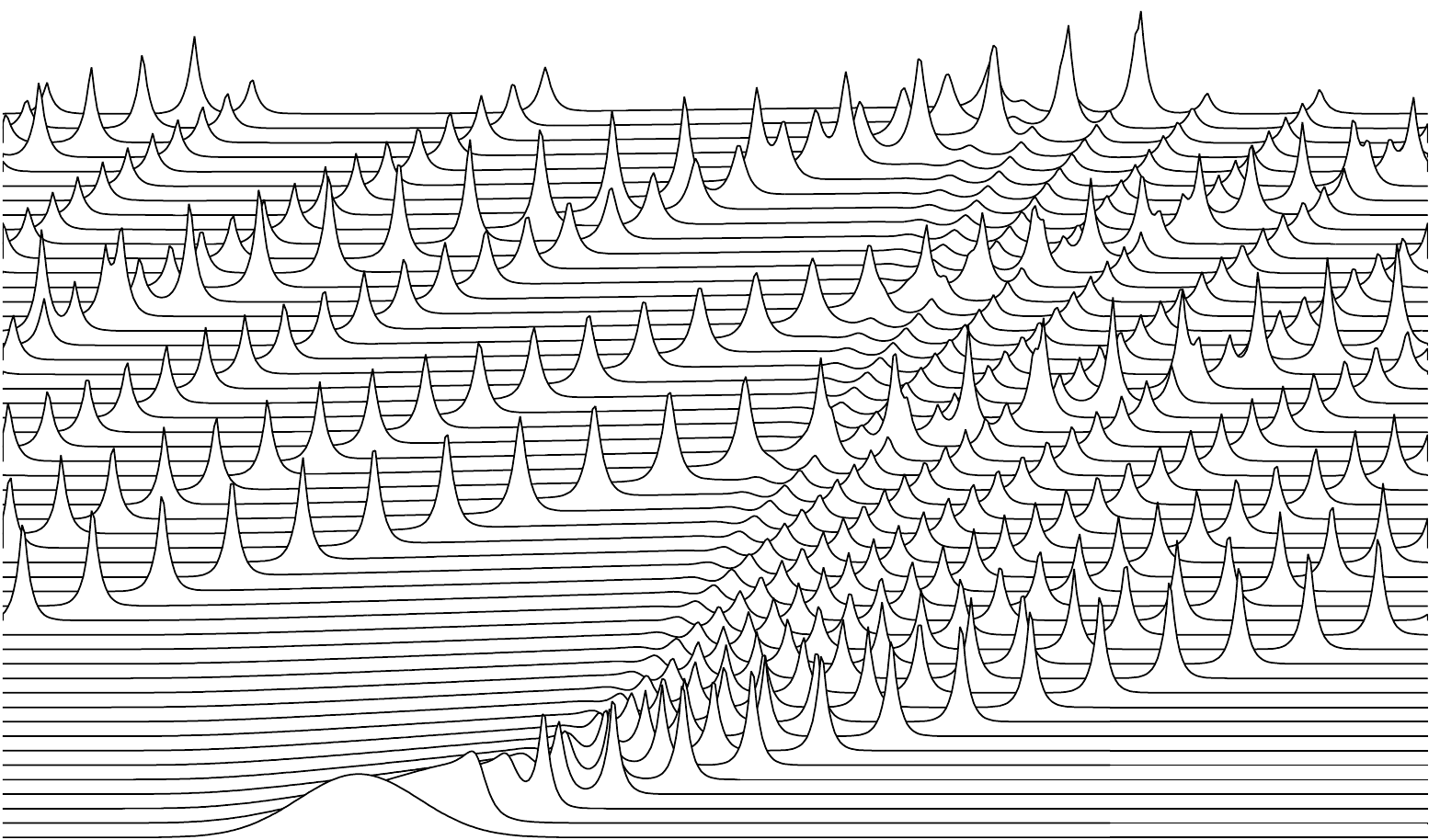}
\end{center}
\caption{\label{CH-peakons} Singular peakon solutions emerge from smooth initial conditions and form a finite dimensional solution set for the CH equation, EPDiff($H^1$). The velocity profile for each individual peakon has the peaked shape given by $u(x,t): = p(t) \exp (-|x - q(t)|/\alpha)$, which is the Green's function for the Helmholtz operator. The main point to notice is that the distance between any two peaks never passes through zero. That is, the peakons keep their order, even after any number of overtaking collisions. (The taller peakons travel faster.)
}
\end{figure}

As mentioned earlier, the objective of the remainder of the paper is to use the Stratonovich stochastic EPDiff Theorem \ref{SEP-thm} to study the effects of introducing this type of stochasticity on the interactions of the peakon solutions of the CH equation with Stratonovich parametric stochastic deformation (P-SD) and canonical Hamiltonian stochastic deformation (CH-SD). The first step is to adapt Theorem \ref{SEP-thm} to accommodate the peakon solutions. For this adaptation, the advection condition \eqref{advec-cond} used previously will be replaced by the definition of peakon velocity as the time derivative of peakon position, as in the first equation in \eqref{pkn-pq-eqns}.

\subsection{Singular momentum map version of the Stratonovich stochastic EPDiff equations}

\begin{theorem}[Canonical Hamiltonian Stochastic Deformation (CH-SD) of EPDiff ]\label{SEP-thm-peakons}\rm$\,$

The action $S(u,p,q)$ for the stochastic variational principle $\delta S = 0$ given by
\begin{align}
S(u,p,q)  &=
\int \bigg( \ell(u)\,dt + \sum_{a} \left\langle  p_a\,,\,dq_a - {u}( q_a, t)\,dt\,\right\rangle \bigg)
-
\int \sum_{i}  h_i(q,p)\circ dW_i(t)
\,,\label{SVP2-redux}
\end{align}
leads to the following Stratonovich form of the \emph{stochastic} EPDiff equation
\begin{align}
\begin{split}
&
dm = - \pounds_u m \, dt + \sum_{i} \big\{m \,,\, h_i(q,p)\big\}  \circ dW_i(t)
\,,\\
&dq_a = u(q_a,t) \,dt + \sum_{i} \big\{q_a \,,\, h_i(q,p)\big\}  \circ dW_i(t)
\,,\\
&dp_a  = - p_a(t) \frac{\partial u}{\partial x}(q_a,t)\,dt 
+ \sum_{i} \big\{p_a \,,\, h_i(q,p)\big\}  \circ dW_i(t) 
\,,
\end{split}
\label{SEP-eqns-thm}
\end{align}
where  the momentum density $m$ and velocity $u$ are given by 
\begin{align}
m(x,t):=\frac{\delta \ell}{\delta u} = \sum_{a=1}^N p_a \delta (x - q_a(t)) 
\,,\quad\hbox{and}\quad
u(x,t): = \sum_{b=1}^N p_b K (x - q_b(t)). 
\label{m-def-thm}
\end{align}
\end{theorem}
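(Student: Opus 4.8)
The plan is to derive the three equations in \eqref{SEP-eqns-thm} as the Euler--Lagrange equations of the constrained stochastic variational principle $\delta S=0$ for the action \eqref{SVP2-redux}, treating $u$, $\{p_a\}$ and $\{q_a\}$ as independent variations, and then to obtain the $m$-equation by substituting the resulting $q_a$- and $p_a$-equations into the Stratonovich differential of the singular-solution ansatz $m=\sum_a p_a\,\delta(x-q_a)$. Because the stochastic integrals are taken in the Stratonovich sense, the ordinary product and chain rules apply throughout, so the computation mirrors the deterministic peakon reduction of \eqref{CH-eqn}--\eqref{pkn-pq-eqns} with stochastic differentials in place of time derivatives.

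First I would take the elementary variations. Varying $u$ affects $\ell(u)$ and the pointwise term $\langle p_a, u(q_a,t)\,dt\rangle$; since $u$ is an independent field and the peakon velocity enters only through evaluation at $q_a$, the pairing against $p_a$ produces a Dirac mass, giving the constraint $\delta\ell/\delta u = \sum_a p_a\,\delta(x-q_a)$, which is exactly the singular momentum map \eqref{m-def-thm}. Varying $p_a$ returns the constraint $dq_a - u(q_a,t)\,dt - \sum_i (\partial h_i/\partial p_a)\circ dW_i(t)=0$; recognising $\partial h_i/\partial p_a = \{q_a,h_i\}$ for the canonical bracket yields the $q_a$-equation. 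Varying $q_a$ --- which appears in $dq_a$, in the evaluation point $u(q_a,t)$, and in $h_i(q,p)$ --- and integrating the $\langle p_a, d(\delta q_a)\rangle$ term by parts with vanishing endpoints gives $dp_a = -p_a\,\partial_x u(q_a,t)\,dt - \sum_i(\partial h_i/\partial q_a)\circ dW_i(t)$; using $-\partial h_i/\partial q_a=\{p_a,h_i\}$ produces the $p_a$-equation. Throughout, $u$ is held independent of $q_a$ during the variation, its dependence on the peakon data being imposed only afterwards through the $\delta u$ constraint together with $m=u-\alpha^2 u_{xx}$.

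Next I would assemble the $m$-equation. Differentiating the ansatz in the Stratonovich sense gives $dm = \sum_a\big(dp_a\,\delta(x-q_a) - p_a\,\delta'(x-q_a)\,dq_a\big)$. Substituting the $q_a$- and $p_a$-equations, the drift contributions reproduce the classical singular-solution identity that converts the peakon ODEs \eqref{pkn-pq-eqns} into $-\pounds_u m\,dt$ weakly, exactly as for the deterministic CH equation. For the noise contributions I would use $\partial m/\partial q_a = -p_a\,\delta'(x-q_a)$ and $\partial m/\partial p_a=\delta(x-q_a)$ to recognise that $\sum_a\big(\{p_a,h_i\}\,\delta(x-q_a) - p_a\,\delta'(x-q_a)\,\{q_a,h_i\}\big) = \{m,h_i\}$, where the bracket on the right is the canonical bracket evaluated on the peakon variables; this is precisely the statement that the singular-solution map $(q,p)\mapsto m$ is a Poisson (momentum) map, so it pushes the canonical dynamics forward to Lie--Poisson form. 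Collecting terms gives the stated $m$-equation.

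The main obstacle is this last push-forward step: verifying that the Holm--Marsden singular momentum map intertwines the canonical Poisson bracket on $(q,p)$ with the Lie--Poisson bracket on $m$, now also for the noise Hamiltonians $h_i$. The deterministic drift part is the standard peakon computation, so the genuinely new content is checking that the \emph{same} momentum-map property carries each $h_i$ through, i.e. that the fine-grained relations $\{q_a,h_i\}$ and $\{p_a,h_i\}$ reassemble into $\{m,h_i\}$. Care is needed with the distributional derivatives $\delta'(x-q_a)$ and with the fact that, unlike in Theorem~\ref{SEP-thm}, the noise here is an arbitrary function $h_i(q,p)$ on the finite-dimensional peakon phase space rather than a Lie-derivative term, so the advection constraint \eqref{advec-cond} is genuinely replaced by the velocity relation $dq_a=u(q_a,t)\,dt+\cdots$.
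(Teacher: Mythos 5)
Your proposal is correct and follows essentially the same route as the paper: take the elementary variations in $u$, $p_a$, $q_a$ to obtain the momentum-map relation \eqref{m-def-thm} together with the canonical $dq_a$- and $dp_a$-equations, then substitute these into the Stratonovich differential of the ansatz $m=\sum_a p_a\,\delta(x-q_a)$ to recover the $m$-equation. The bracket-reassembly step you single out as the ``main obstacle'' --- that $\sum_a\big(\{p_a,h_i\}\,\delta(x-q_a)-p_a\,\delta'(x-q_a)\,\{q_a,h_i\}\big)=\{m,h_i\}$ via the chain rule for the canonical bracket --- is exactly what the paper compresses into the single sentence ``Substituting the latter two equations in \eqref{var-eqns-thm} into the time derivative of the first one yields the first equation in \eqref{SEP-eqns-thm},'' so your write-up simply makes that implicit computation explicit.
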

\begin{proof}
As in the proof of Theorem \ref{SEP-thm}, the first step is to take the variations of the action integral \eqref{SVP2-redux}, to find
\begin{align}
\begin{split}
\delta u:\quad&
\frac{\delta \ell}{\delta u} - \sum_{a=1}^N p_a \delta (x - q_a(t)) = 0
\,,\\
\delta p:\quad&
dq_a - u(q_a,t) \,dt - \sum_{i} \frac{\partial h_i}{\partial p_a}(q,p) \circ dW_i(t)   = 0
\,,\\
\delta q:\quad&
- dp_a - p_a(t) \frac{\partial u}{\partial x}(q_a,t)\,dt 
- \sum_i \frac{\partial h_i}{\partial q_a}(q,p)\circ dW_i(t)
= 0
\,,\end{split}
\label{var-eqns-thm}
\end{align}
after integrations by parts with vanishing endpoint and boundary conditions.
The first variational equation captures the relation \eqref{m-def-thm}, and latter two equations in \eqref{var-eqns-thm} produce the corresponding equations in \eqref{SEP-eqns-thm}. Substituting the latter two equations in \eqref{var-eqns-thm} into the time derivative of the first one yields the first equation in \eqref{SEP-eqns-thm}.
\end{proof}

The particular choice of the functions $h_i(q,p) = \sum_{a=1}^N p_a \xi_i(q_a)$ reproduces the results of Theorem~\ref{SEP-thm} for parameterised stochastic deformation (P-SD) of the peakon solutions. We summarise this observation in the following Corollary.
\begin{corollary}\rm[P-SD is a special case of CH-SD for EPDiff]
\label{Lemma-m-eqn-peakon}
Given the set of diffusivities $\xi_i(x)$, $i=1,\ldots,M$, let $h_i(q,p) = \sum_{a=1}^N p_a \xi_i(q_a)$. Then the momentum density $m(x,t)$ satisfies the equation
\begin{equation}
\label{m-eqn-peakons}
dm + \widetilde{\pounds}_{\widetilde u}m =0\,, 
\end{equation}
where the stochastic vector field $\widetilde u(x,t)$ is given by the P-SD formula,
\begin{align}
\widetilde u(x,t) = u(x,t) \,dt + \sum_{i} \xi_i(x) \circ dW_i(t)\,.
\label{path-thm}
\end{align} 

\begin{proof}
Specialise to $h_i(q,p) = \sum_{a=1}^N p_a \xi_i(q_a)$ in the first line of equation \eqref{SEP-eqns-thm} in Theorem \ref{SEP-thm-peakons}.
\end{proof}
\end{corollary}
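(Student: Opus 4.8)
The plan is to start from the momentum equation already established in Theorem~\ref{SEP-thm-peakons}, namely the first line of \eqref{SEP-eqns-thm},
\[
dm = -\,\pounds_u m\,dt + \sum_i \{m,\,h_i(q,p)\}\circ dW_i(t),
\]
and insert the collective Hamiltonians $h_i(q,p)=\sum_a p_a\xi_i(q_a)$. The deterministic drift $-\pounds_u m\,dt$ already coincides with the deterministic EPDiff/peakon drift and is untouched, so the entire content of the Corollary reduces to identifying the stochastic coefficient $\{m,h_i\}$ with $-\pounds_{\xi_i}m$. Once that identity is in hand, comparison with the stochastic Lie-differential formula \eqref{eq: Stochastic Lie differential formula} of Theorem~\ref{thm: Stochastic Lie differential formula}, read with the sign convention $\widetilde u = u\,dt + \sum_i\xi_i\circ dW_i$ of \eqref{path-thm}, repackages the equation as $dm + \widetilde{\pounds}_{\widetilde u}m = 0$, which is \eqref{m-eqn-peakons}.

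First I would compute $\{m(x,t),h_i\}$ directly, using the canonical bracket $\{q_a,p_b\}=\delta_{ab}$ together with the singular Ansatz $m(x,t)=\sum_a p_a\delta(x-q_a)$ from \eqref{m-def-thm}. Differentiating $m$ with respect to $q_c$ produces a $\delta'(x-q_c)$ term and with respect to $p_c$ a $\delta(x-q_c)$ term, so the bracket is a sum of a $\delta'$ contribution weighted by $p_c\xi_i(q_c)$ and a $\delta$ contribution weighted by $p_c\xi_i'(q_c)$. Separately I would expand the one-dimensional one-form-density Lie derivative $\pounds_{\xi_i}m = \xi_i m_x + 2m(\xi_i)_x$ on the same Ansatz, using the distributional identities $f(x)\delta(x-q_a)=f(q_a)\delta(x-q_a)$ and $f(x)\delta'(x-q_a)=f(q_a)\delta'(x-q_a)-f'(q_a)\delta(x-q_a)$ to pull the smooth factors $\xi_i(x)$ onto the peakon locations. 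Matching the two distributional expansions term by term should give $\{m,h_i\}=-\pounds_{\xi_i}m$, with the factor of two in the density Lie derivative precisely reconciling the $\delta'$ and $\delta$ coefficients.

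A cleaner, more structural route I would flag is to recognise $h_i$ as the collective Hamiltonian obtained by pairing the singular momentum map $m=\sum_a p_a\delta(\,\cdot-q_a)$ of \cite{HoMa2004} with the fixed vector field $\xi_i$, that is $h_i=\langle m,\xi_i\rangle$. Because that momentum map is Poisson, the canonical bracket of $m$ with such a collectivised Hamiltonian reproduces the Lie--Poisson (coadjoint) action, so $\{m,h_i\}=-\mathrm{ad}^*_{\xi_i}m=-\pounds_{\xi_i}m$; here the last equality is exactly the identification of coadjoint action with Lie-derivative action on one-form densities already used at the end of \eqref{calc-lem} in Lemma~\ref{Lemma-m-eqn}. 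This trades the distributional bookkeeping for the collectivisation property of the momentum map.

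I expect the main obstacle to be precisely this distributional computation of the bracket and its matching with the density Lie derivative: the $\delta'$ terms must be handled with care, the factor of two in the one-dimensional one-form-density Lie derivative is easy to misplace, and the sign of $\{m,h_i\}$ relative to $\pounds_{\xi_i}m$ must be tracked consistently with the $+$ sign chosen for $\widetilde u$ in \eqref{path-thm}, which differs from the $-$ sign used in Theorem~\ref{SEP-thm}. If I instead take the momentum-map route, the obstacle shifts to justifying that the collectivisation identity $\{m,\langle m,\xi_i\rangle\}=-\pounds_{\xi_i}m$ holds in the weak, distributional sense appropriate to these singular peakon solutions rather than only for smooth $m$.
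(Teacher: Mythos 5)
Your proposal is correct and takes essentially the same route as the paper: the paper's entire proof is the one-line instruction to specialise $h_i(q,p)=\sum_{a}p_a\xi_i(q_a)$ in the first line of \eqref{SEP-eqns-thm}, which is exactly your starting point. The distributional verification of $\{m,h_i\}=-\pounds_{\xi_i}m$ (or its momentum-map collectivisation variant) and the sign bookkeeping relative to \eqref{path-thm} that you supply are precisely the details the paper leaves implicit, and your computations are sound.
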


\begin{remark}[Outlook: Comparing results for P-SD and CH-SD]
\label{sec: Remark on P-SD and CH-SD}\rm
In Section~\ref{sec: The Fokker-Planck equation} and Section~\ref{sec: Numerical experiments} we will investigate the effects of choosing between two slightly different types of stochastic potentials on the interaction of two peakons, $N=2$, corresponding to P-SD and CH-SD. The two options are $h^{(1)}_i(q,p) = \sum_{a=1}^N p_a \xi_i(q_a)$ and $h^{(2)}_i(q,p) = \sum_{a=1}^N p_a \varphi_{ia}(q)$, respectively, for $i=1,\ldots,M$. These are both linear in the peakon momenta and in the simplest case they have constant coefficients. We will consider numerical simulations for two cases: $h^{(1)}_1(q,p)= c(p_1+p_2)$ (P-SD for $M=1$) and $h^{(2)}_1(q,p)= \beta_1 p_1$, $h^{(2)}_2(q,p)= \beta_2 p_2$ (CH-SD for $M=2$) with constants $c,\beta_1,\beta_2$. Although these choices for $h^{(1)}_1$ and $h^{(2)}_i$ are very similar, they will produce quite different solution behaviour in our numerical simulations of peakon-peakon overtaking collisions in Section~\ref{sec: Numerical experiments}. 
\end{remark}

\begin{remark}[Stratonovich stochastic EPDiff equations in one dimension]\rm$\,$

\begin{enumerate}
\item
In one spatial dimension, equation \eqref{m-eqn-peakons} becomes
\begin{align}
dm + \big( um_x + 2mu_x\big)dt
+ 
m_x\sum_{i} \xi_i(x) \circ dW_i(t) + 2m\sum_{i} \xi'_i(x) \circ dW_i(t)
= 0
\,.
\label{SEP-1D-1}
\end{align}
Importantly, the multiplicative noise multiplies both the solution and its gradient. The latter is not a common form for stochastic PDEs. In addition, both the spatial correlations $\xi_i(x)$ and their derivatives $\xi'_i(x)$ are involved. 

\item
The equations for $dq_a$ and $dp_a$ in \eqref{SEP-eqns-thm} are stochastic canonical Hamiltonian equations (SCHEs) in the sense of Bismut \cite{Bi1981,LaCa-Or2008}. These equations for $dq_a$ and $dp_a$ may be rewritten as
\begin{align}
\begin{split}
&
dq_a = \frac{\partial H}{\partial p_a}(q,p) \,dt + \sum_{i} \frac{\partial h_i}{\partial p_a}(q,p) \circ dW_i(t)
\,,\\&
dp_a = - \frac{\partial H}{\partial q_a}(q,p)\,dt 
- \sum_i \frac{\partial h_i}{\partial q_a}(q,p) \circ dW_i(t)
\,,
\end{split}
\label{SEP-eqns-thm-qp}
\end{align}
where the deterministic Hamiltonian is given by
\begin{align}
H(q,p) =  \frac12\sum_{a,b} p_ap_bK(q_a-q_b)
\,.
\label{qp-Ham}
\end{align}
The stochastic canonical Hamilton equations \eqref{SEP-eqns-thm-qp} can also be obtained by extremising the \emph{phase-space action functional}
\begin{equation}
\label{eq: Stochastic phase-space action functional}
S\big[q(t),p(t)\big] = \int_0^T \Big( \sum_{a=1}^N p_a \circ dq_a - H(q,p)\,dt \Big) 
- \int_0^T \sum_{i=1}^M h_i(q,p) \circ dW_i(t)\,.
\end{equation}
This is the restriction of \eqref{SVP2-redux} to the submanifold defined by the Ansatz \eqref{m-def-thm}. 
\item
In the It\^o version of stochastic canonical Hamiltonian equations, the noise terms have zero mean, but \textit{additional drift terms arise}. These drift terms are double canonical Poisson brackets, which are diffusive \cite{LaCa-Or2008}:
\begin{align*}
&\delta p :\quad dq_a = u(q_a,t)dt + 
\underbrace{ 
 \sum_i \{q_a,h_i\} dW_i(t) \
 }_{\hbox{It\^o Noise for \it q}}
+ 
\underbrace{ 
\frac12\sum_i \{\{q_a,h_i\},h_i\}dt\
}_{\hbox{It\^o Drift for \it q}},
\\
&\delta q :\quad dp_a = -p_a \frac{\partial u}{\partial x}(q_a,t) \,dt 
+  
\underbrace{ 
\sum_i \{p_a,h_i\} dW_i(t)\
}_{\hbox{It\^o Noise for \it p}} 
 +  
\underbrace{
\frac12\sum_i \{\{p_a,h_i\},h_i\}dt\
}_{\hbox{It\^o Drift for \it p}}.
\end{align*}
The It\^o stochastic dynamics of landmark points in the image registration problem discussed in Trouv\'e and Vialard \cite{TrVi2012} is recovered when we choose $h_i(p,q)=-\sigma q_i$, with $i=1,2,3,$ for $q\in\mathbb{R}^3$ and constant $\sigma$. In that particular case, the double bracket terms vanish. In the present study, we will take $h_i(p,q)$ as the two cases mentioned in Remark~\ref{sec: Remark on P-SD and CH-SD} and compare their effects on the dynamics of \emph{peakons} with $K (x - y)=\exp (-|x-y|/\alpha)$ and \emph{pulsons} with $K (x - y)=\exp (-(x-y)^2/\alpha^2)$ in one spatial dimension. We have introduced the latter Gaussian shaped pulsons, in order to determine how sensitively the numerical results we shall discuss below depend on the jump in derivative of the velocity profile for peakons. 
\end{enumerate}

\end{remark}

\section{The Fokker-Planck equation}
\label{sec: The Fokker-Planck equation}

The stochastic process in \eqref{SEP-eqns-thm-qp} for $(q(t),p(t))$ can be described with the help of a transition density function $\rho(t,q,p; \bar q, \bar p)$ which represents the probability density that the process, initially in the state $(\bar q, \bar p)$, will reach the state $(q,p)$ at time $t$. The transition density function satisfies the Fokker-Planck equation corresponding to \eqref{SEP-eqns-thm-qp} (see \cite{GardinerStochastic}, \cite{KloedenPlatenSDE}). 
Let us examine the form of this equation in the case $h^{(2)}_1(q,p)= \beta_1 p_1$, $h^{(2)}_2(q,p)= \beta_2 p_2$. In that case the noise in \eqref{SEP-eqns-thm-qp} is additive, and the Stratonovich and It\^o calculus yield the same equations of motion.

\subsection{Single-pulson dynamics}
\label{sec: Single-pulson dynamics}

Consider a single pulson ($N=1$) subject to one-dimensional (i.e., $M=1$) Wiener process, with the stochastic potential $h(q,p)=\beta p$, where $\beta$ is a nonnegative real parameter. The stochastic Hamiltonian equations \eqref{SEP-eqns-thm-qp} take the form $dq = p \, dt + \beta \circ dW(t)$, $dp =0,$ which are easily solved by
\begin{equation}
\label{eq: Solution of the stochastic Hamiltonian equations for one pulson}
q_\beta(t) = \bar q + \bar p t + \beta W(t), \qquad \qquad p_\beta(t) = \bar p,
\end{equation}
where $(\bar q, \bar p)$ are the initial conditions. Note that the pulson/peakon retains its initial momentum/height $\bar p$. We will use this solution as a reference for the convergence test in Section~\ref{sec: Convergence test}. The corresponding Fokker-Planck equation takes the form
\begin{equation}
\label{eq: F-P equation for one pulson}
\frac{\partial \rho}{\partial t} + p \frac{\partial \rho}{\partial q} - \frac{1}{2} \beta^2 \frac{\partial^2 \rho}{\partial q^2}=0
\end{equation}
with the initial condition $\rho(0,q,p; \bar q, \bar p) = \delta(q-\bar q)\delta(p-\bar p)$. This advection-diffusion equation is easily solved with the help of the fundamental solution for the heat equation, and the solution yields
\begin{equation}
\label{eq: Solution of F-P equation for one pulson}
\rho_\beta(t,q,p; \bar q, \bar p) = \frac{1}{\beta \sqrt{2 \pi t}} e^{-\frac{(q-\bar q-pt)^2}{2 \beta^2 t}} \delta(p-\bar p).
\end{equation}
This solution means that the initial momentum $\bar p$ is preserved, which is consistent with \eqref{eq: Solution of the stochastic Hamiltonian equations for one pulson}. The position has a Gaussian distribution which widens with time, and whose maximum is advected with velocity $\bar p$.

\subsection{Two-pulson dynamics}
\label{sec: Two-pulson dynamics}

The dynamics of two interacting pulsons has been thoroughly studied and possesses interesting features (see \cite{FringerHolmGeodesic}, \cite{HolmGMS}). 
It is therefore intriguing to see how this dynamics is affected by the presence of noise. Consider $N=2$ pulsons subject to a two-dimensional (i.e., $M=2$) Wiener process, with the stochastic potentials $h_1(q,p) = \beta_1 p_1$ and $h_2(q,p) = \beta_2 p_2$, where $q=(q_1,q_2)$, $p=(p_1,p_2)$, and $\beta_1,\beta_2 \geq 0$. The corresponding Fokker-Planck equation takes the form
\begin{align}
\label{eq: F-P equation for two pulsons}
\frac{\partial \rho}{\partial t} + \frac{\partial}{\partial q_1} \big[a_1(q,p) \rho \big] + \frac{\partial}{\partial q_2} \big[a_2(q,p) \rho \big] + \frac{\partial}{\partial p_1} \big[a_3(q,p) \rho \big] + \frac{\partial}{\partial p_2} \big[a_4(q,p) \rho \big] - \frac{1}{2} \beta_1^2 \frac{\partial^2 \rho}{\partial q_1^2} - \frac{1}{2} \beta_2^2 \frac{\partial^2 \rho}{\partial q_2^2}=0
\end{align}
with the initial condition $\rho\big(0,q,p; \bar q, \bar p \big) = \delta(q_1-\bar q_1) \delta(p_1-\bar p_1) + \delta(q_2-\bar q_2) \delta(p_2-\bar p_2)$, where
\begin{align}
\label{eq: Drift functions}
\begin{split}
a_1(q,p) &= p_1 + p_2 K(q_1-q_2), \qquad\qquad a_3(q,p) = -p_1 p_2 K'(q_1-q_2),  \\
a_2(q,p) &= p_2 + p_1 K(q_1-q_2), \qquad\qquad a_4(q,p) = p_1 p_2 K'(q_1-q_2).
\end{split}
\end{align}
Despite its relatively simple structure, it does not appear to be possible to solve this equation analytically. It is nevertheless an elementary exercise to verify that the function
\begin{equation}
\label{eq: Asymptotic solution to the F-P equation for two pulsons}
\rho(t,q_1,q_2,p_1,p_2; \bar q_1, \bar q_2, \bar p_1, \bar p_2) = \rho_{\beta_1}(t,q_1,p_1; \bar q_1, \bar p_1) + \rho_{\beta_2}(t,q_2,p_2; \bar q_2, \bar p_2),
\end{equation} 
where $\rho_{\beta_i}$ is given by \eqref{eq: Solution of F-P equation for one pulson}, satisfies \eqref{eq: F-P equation for two pulsons} asymptotically as $q_1-q_2 \longrightarrow \pm \infty$, assuming the Green's function and its derivative decay in that limit. This simple observation gives us an intuition that stochastic pulsons should behave like individual particles when they are far from each other, just like in the deterministic case. In order to study the stochastic dynamics of the collision of pulsons, we need to resort to Monte Carlo simulations. 

In Section~\ref{sec: Stochastic variational integrator} we discuss our numerical algorithm, and in Section~\ref{sec: Numerical experiments} we present the results of our numerical studies.

\subsection{Two-pulson dynamics with P-SD}
\label{sec: Two-pulson dynamics with P-SD}

The stochastic interaction of two (or more) pulsons can be analysed explicitly when the stochastic potential has the form $h(q,p)=\beta (p_1+p_2)$ for $\beta\geq 0$ (P-SD for $M=1$; see Remark~\ref{sec: Remark on P-SD and CH-SD}). It is an elementary exercise to show that in that case the stochastic Hamiltonian equations \eqref{SEP-eqns-thm-qp} are solved by

\begin{equation}
\label{eq: P-SD exact solution}
q_a(t) = q^D_a(t) + \beta W(t), \qquad\qquad p_a(t) = p^D_a(t), \qquad\qquad a=1,2,
\end{equation}

\noindent
where $q^D_a(t)$ and $p^D_a(t)$ are the solutions of the deterministic system \eqref{pkn-pq-eqns}. We verify this numerically in Section~\ref{eq: Additive noise in the momentum equation}.

\section{Stochastic variational integrator}
\label{sec: Stochastic variational integrator}

Given the variational structure of the problem we have formulated in Theorem \ref{SEP-thm-peakons}, it is natural to employ variational integrators for numerical simulations. For an extensive review of variational integrators we refer the reader to Marsden \& West \cite{MarsdenWestVarInt} and the references therein. Stochastic variational integrators were first introduced in Bou-Rabee \& Owhadi \cite{BR-O2009}. These integrators were derived for Lagrangian systems using the Hamilton-Pontryagin variational principle. In our case, however, we find it more convenient to stay on the Hamiltonian side and use the discrete variational Hamiltonian mechanics introduced in Lall \& West \cite{LallWestHamiltonian}. We combine the ideas of \cite{BR-O2009} and \cite{LallWestHamiltonian}, and propose the following discretization of the phase-space action functional \eqref{eq: Stochastic phase-space action functional}:


%
\begin{equation}
\label{eq: Discrete action}
S_d = \sum_{k=0}^{K-1} \sum_{i=1}^N  \bigg(p_i^k (q_i^{k+1}-q_i^k) - H(q^{k+1},p^k)\Delta t \bigg) -  \sum_{k=0}^{K-1} \sum_{m=1}^M \frac{h_m(q^k,p^k) + h_m(q^{k+1},p^{k+1})}{2} \Delta W_k^m,
\end{equation}
where $\Delta t = T/K$ is the time step, $(q^k, p^k)$ denote the position and momentum at time $t_k = k \Delta t$, and $\Delta W^m_k \sim N(0,\Delta t)$ are independent normally distributed random variables for $m=1,\ldots,M$ and $k=0,\ldots, K-1$. Let $L(q, \dot q)$ denote the Lagrangian related to $H(q,p)$ via the standard Legendre transform $\dot q_i = \partial H / \partial p_i$. Then one can easily show that \eqref{eq: Discrete action} is equivalent to the discretization
\begin{equation}
\label{eq: Discrete H-P action}
S_d = \sum_{k=0}^{K-1} \bigg( L(q^k,v^k) + \sum_{i=1}^N  p_i^{k+1} \Big( \frac{q_i^{k+1}-q_i^k}{\Delta t} - v^{k+1} \Big) \bigg) \Delta t -  \sum_{k=0}^{K-1} \sum_{m=1}^M \frac{h_m(q^k,p^k) + h_m(q^{k+1},p^{k+1})}{2} \Delta W_k^m
\end{equation}
of the Hamilton-Pontryagin principle used in \cite{BR-O2009}. Omitting the details, \eqref{eq: Discrete action} is obtained by computing the left discrete Hamiltonian $H^-(q^{k+1},p^k)$ corresponding to the discrete Lagrangian $L_d(q^k, q^{k+1}) = \Delta t L(q^{k+1},(q^{k+1}-q^k)/\Delta t)$. The interested reader is referred to \cite{LallWestHamiltonian} for more details. Extremizing \eqref{eq: Discrete action} with respect to $q^k$ and $p^k$ yields the following implicit stochastic variational integrator:
\begin{align}
\label{eq: Stochastic Variational Integrator}
\frac{q_i^{k+1} - q_i^k}{\Delta t} &= \frac{\partial H}{\partial p_i}\big(q^{k+1},p^k\big) +\sum_{m=1}^M \frac{\partial h_m}{\partial p_i}\big(q^k,p^k\big) \frac{\Delta W_{k-1}^m+\Delta W_k^m}{2 \Delta t}, \nonumber \\
\frac{p_i^{k+1} - p_i^k}{\Delta t} &= -\frac{\partial H}{\partial q_i}\big(q^{k+1},p^k\big) -\sum_{m=1}^M \frac{\partial h_m}{\partial q_i}\big(q^{k+1},p^{k+1}\big) \frac{\Delta W_k^m+\Delta W_{k+1}^m}{2 \Delta t},
\end{align}
for $i=1,\ldots, N$. Knowing $(q^k,p^k)$ at time $t_k$, the system above allows to solve for the position $q^{k+1}$ and momentum $p^{k+1}$ at the next time step. For increased computational efficiency, it is advisable to solve the first (nonlinear) equation for $q^{k+1}$ first, and then the second equation for $p^{k+1}$. 

Note that in \eqref{eq: Discrete action} we used $p_i^k (q_i^{k+1}-q_i^k)$ to approximate the Stratonovich integral $\int_{t_k}^{t_{k+1}}p_i\circ dq_i$ in \eqref{eq: Stochastic phase-space action functional}, which means the numerical scheme \eqref{eq: Stochastic Variational Integrator} will not be convergent for general nonlinear stochastic potentials $h_i(q,p)$ (i.e., multiplicative noise). If we used $(p_i^k+p_i^{k+1})(q_i^{k+1}-q_i^k)/2$, the resulting integrator would be a two-step method, i.e., it would not be self-starting, and its geometric/symplectic properties would be in question. Nevertheless, for additive noise, i.e., when the stochastic potentials $h_i(q,p)$ are linear in their arguments, the integrator \eqref{eq: Stochastic Variational Integrator} is a simple modification of the Euler-Maruyama method and its convergence can be established using similar techniques (see \cite{Milstein1995}). The integrator \eqref{eq: Stochastic Variational Integrator} has strong order of convergence 0.5, and weak order of convergence 1. We further verify this fact numerically in Section~\ref{sec: Convergence test}.

The integrator \eqref{eq: Stochastic Variational Integrator} is symplectic, and preserves momentum maps corresponding to (discrete) symmetries of the discrete Hamiltonian---for instance, if $H(q,p)$ and all $h_i(q,p)$ are translationally invariant, as in our simulations in Section~\ref{sec: Numerical experiments}, then the total momentum $\sum_{i=1}^N p_i$ is numerically preserved. The proof of these facts trivially follows from \cite{BR-O2009}, keeping in mind that the momenta $p_i$ and velocities $\dot q_i$ are related via the Legendre transform.

\section{Numerical experiments}
\label{sec: Numerical experiments}

We performed numerical simulations of the rear-end collision of two pulsons for two different Green's functions, namely $K(q_1-q_2) = e^{-(q_1-q_2)^2}$ and $K(q_1-q_2) = e^{-2|q_1-q_2|}$. In the latter case, the corresponding pulsons are commonly called \textquoteleft peakons'. We investigated the initial conditions  $\bar q_1 = 0$, $\bar q_2 = 10$, $\bar p_2 =1$ together with the following four initial values: $\bar p_1 = 8$, $\bar p_1 = 4$, $\bar p_1 = 2$, $\bar p_1 = 1.$ That is, we varied the initial momentum of the faster pulson. We perturbed the slower pulson by introducing a one-dimensional Wiener process with the stochastic potential $h(q,p)=\beta p_2$ (this corresponds to $\beta_1=0$, $\beta_2=\beta$ in Section~\ref{sec: Two-pulson dynamics}). The pulsons were initially well-separated, so their initial evolution was described by \eqref{eq: Asymptotic solution to the F-P equation for two pulsons}. The parameter $\beta$ was varied in the range $[0, 6.5]$. We used the time step $\Delta t = 0.02$, and for each choice of the parameters 50000 sample solutions were computed until the time $T=100$.

\subsection{Sample paths and mean solutions}
\label{sec: Sample paths and mean solutions}

\begin{figure}[h!]
	\centering
		\includegraphics[width=0.8\textwidth]{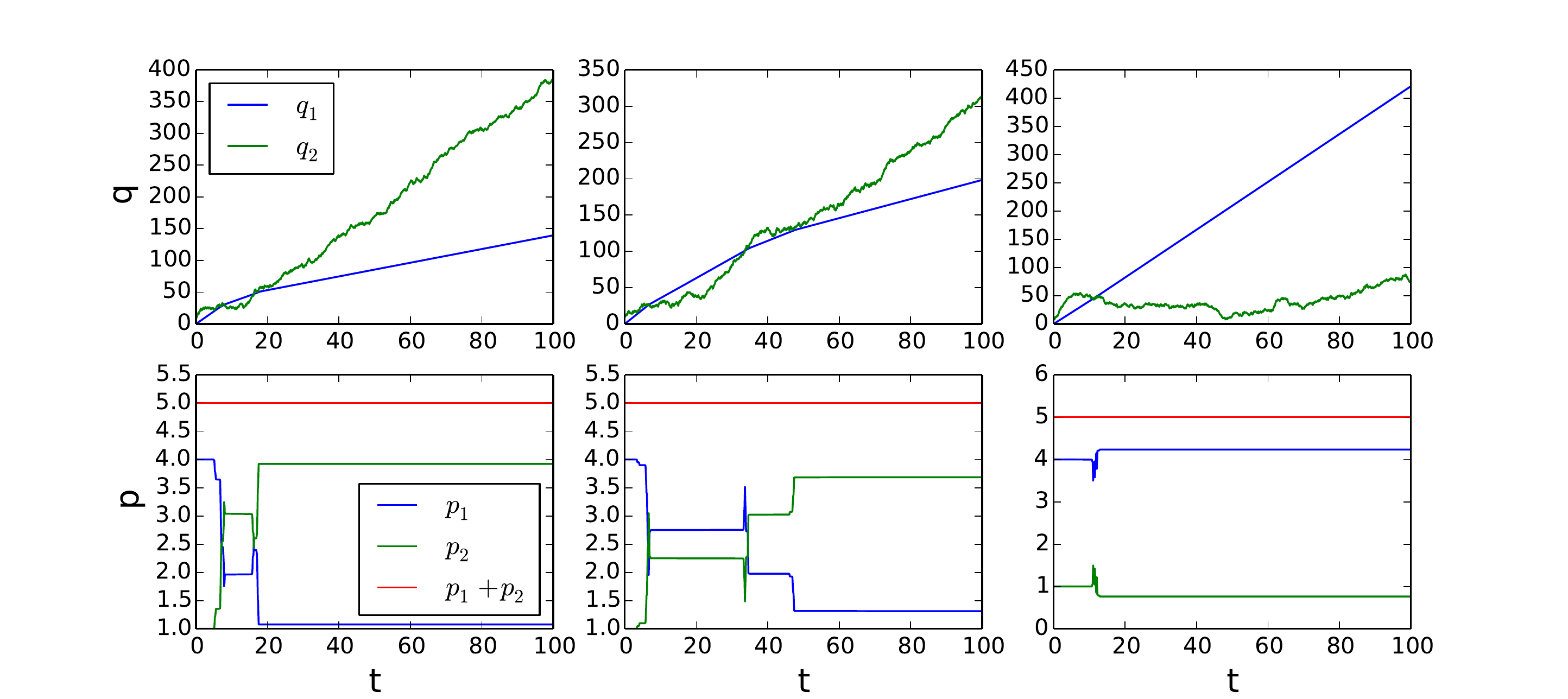}
		\caption{Example numerical sample paths for Gaussian pulsons for the simulations with $\bar p_1=4$ and $\beta=4$. The positions are depicted in the plots in the upper row, and the corresponding momenta are shown in the plots in the lower row.}
		\label{fig: Sample paths 1}
\end{figure}
\begin{figure}[h!]
	\centering
		\includegraphics[width=0.8\textwidth]{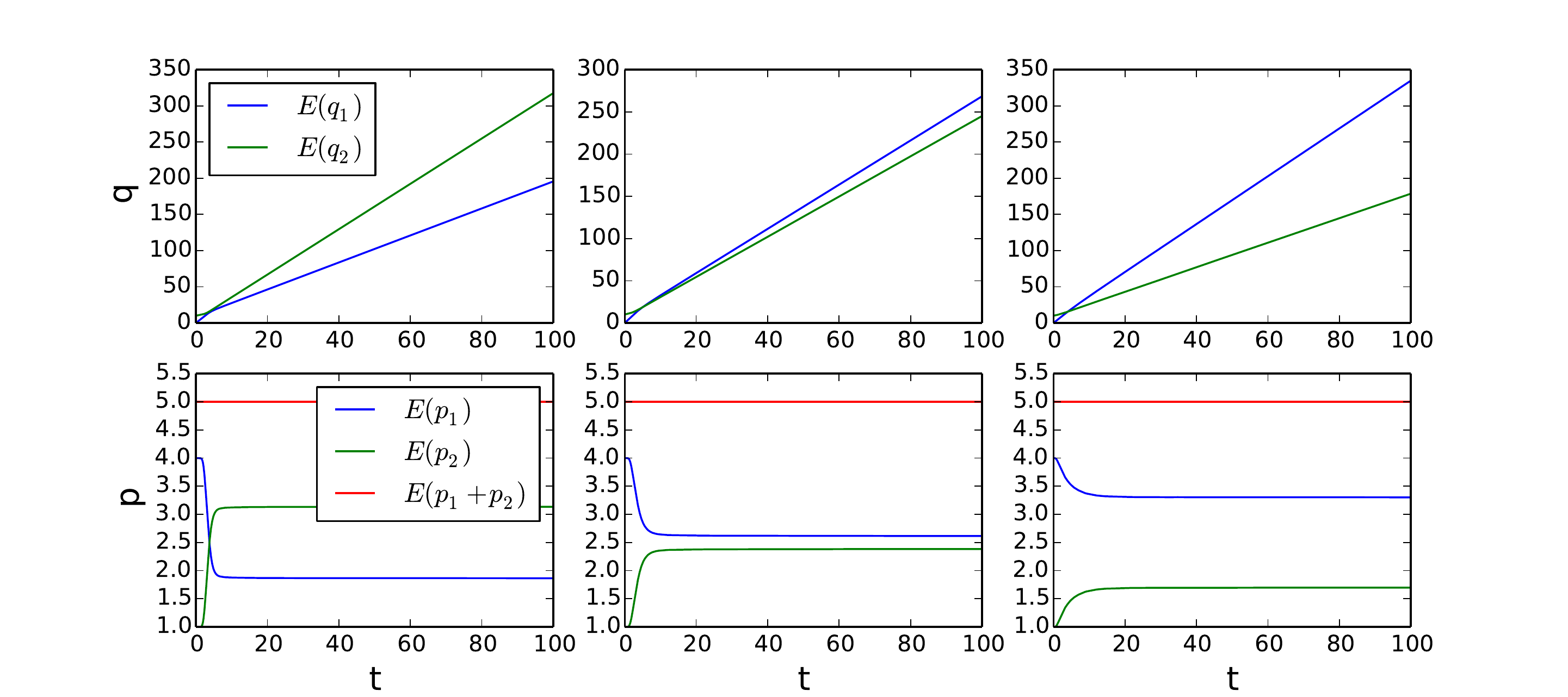}
		\caption{Numerical mean paths for Gaussian pulsons for the simulations with $\bar p_1=4$. Results for three example choices of the parameter $\beta$ are presented: $\beta=1.5$ ({\it left}), $\beta=2.5$ ({\it middle}), and $\beta=4.5$ ({\it right}). The positions are depicted in the plots in the upper row, and the corresponding momenta are shown in the plots in the lower row.}
		\label{fig: Mean paths 1}
\end{figure}
\begin{figure}[h!]
	\centering
		\includegraphics[width=0.7\textwidth]{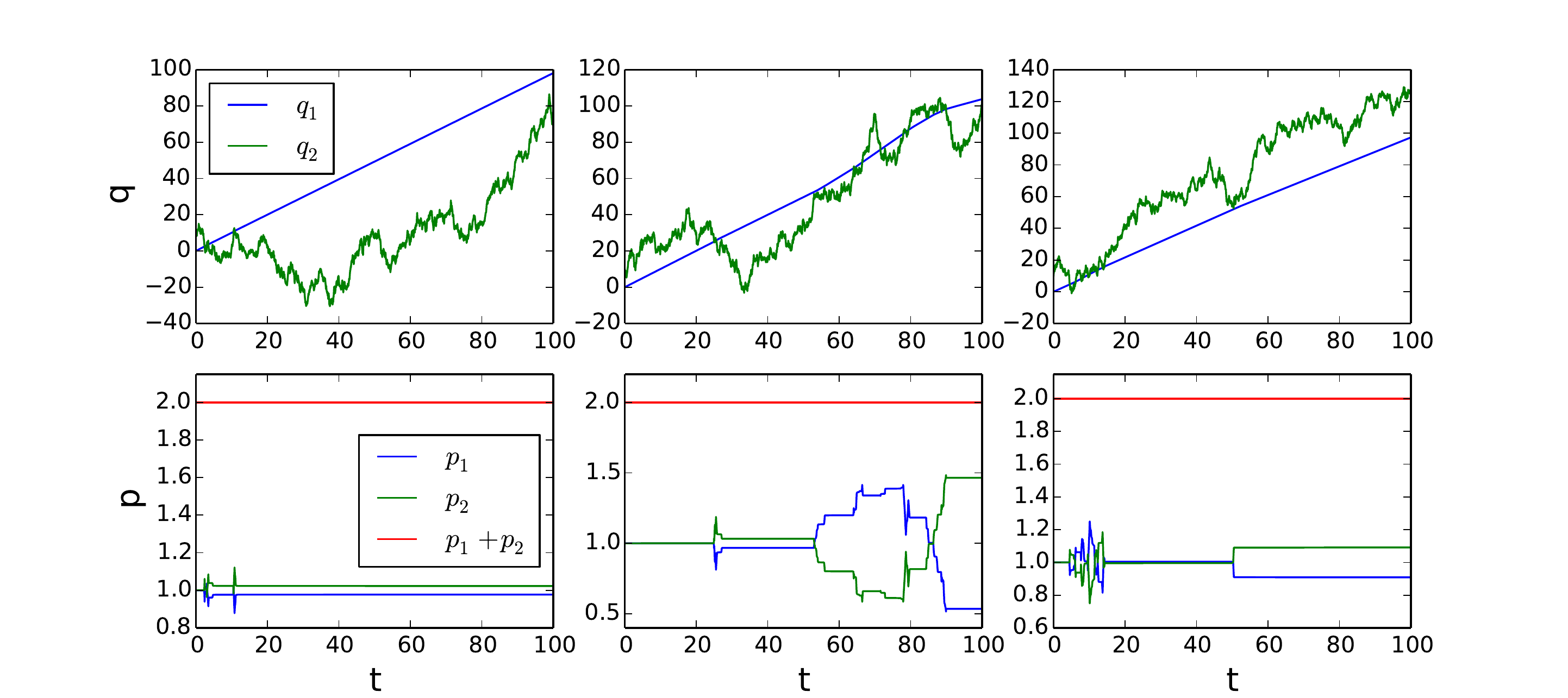}
		\caption{Example numerical sample paths for Gaussian pulsons for the simulations with $\bar p_1=1$ and $\beta=5$. The positions are depicted in the plots in the upper row, and the corresponding momenta are shown in the plots in the lower row.}
		\label{fig: Sample paths 2}
\end{figure}
\begin{figure}[h!]
	\centering
		\includegraphics[width=0.7\textwidth]{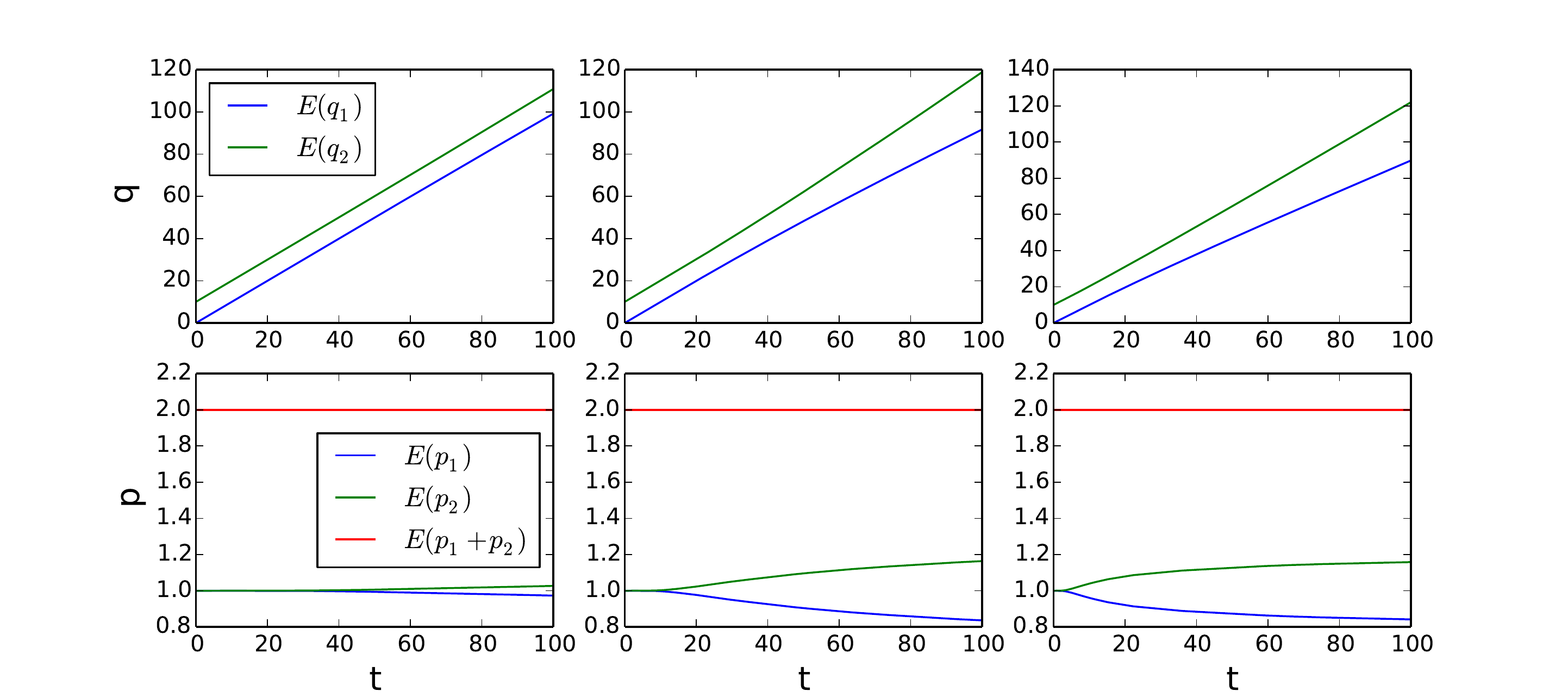}
		\caption{Numerical mean paths for Gaussian pulsons for the simulations with $\bar p_1=1$. Results for three example choices of the parameter $\beta$ are presented: $\beta=0.5$ ({\it left}), $\beta=1$ ({\it middle}), and $\beta=2$ ({\it right}). The positions are depicted in the plots in the upper row, and the corresponding momenta are shown in the plots in the lower row.}
		\label{fig: Mean paths 2}
\end{figure}
\begin{figure}
	\centering
		\includegraphics[width=0.7\textwidth]{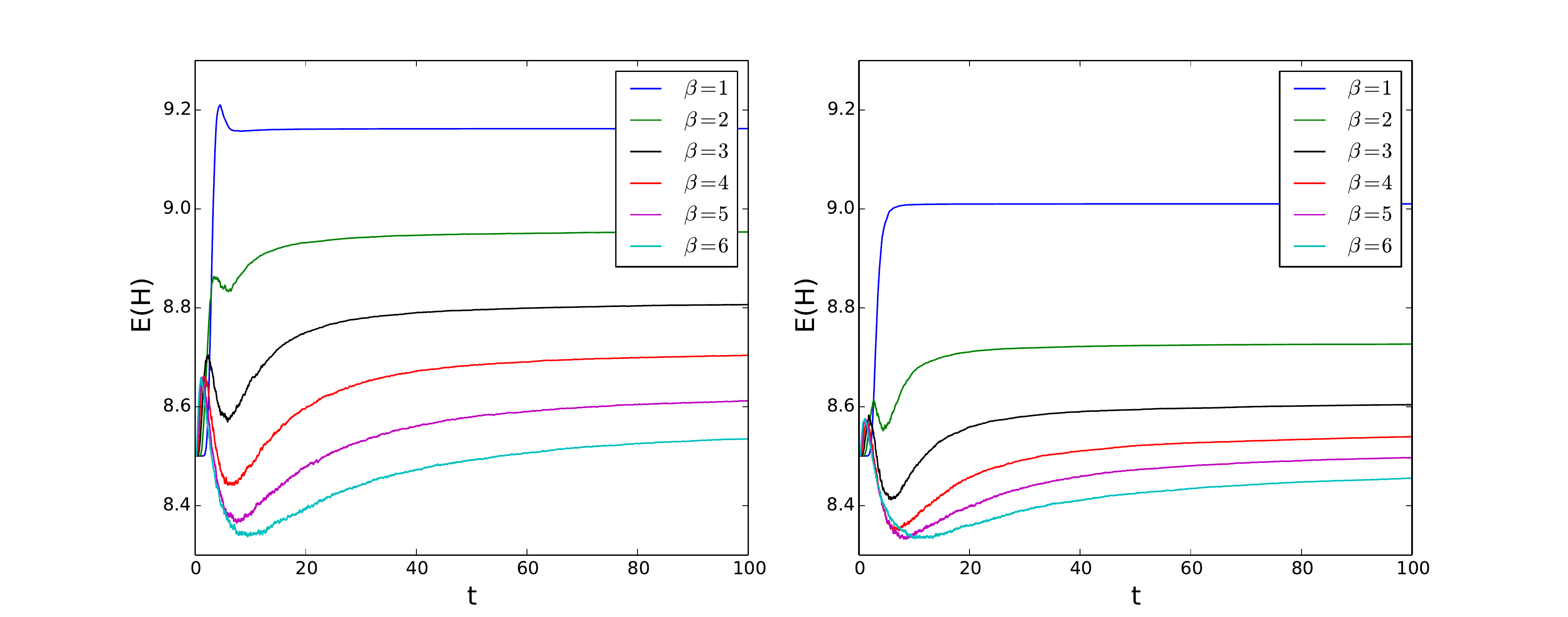}
		\caption{ Mean Hamiltonian for the simulations with $\bar p_1 = 4$ for Gaussian pulsons (\emph{left)} and peakons (\emph{right)}. }
		\label{fig: Hamiltonian}
\end{figure}
Figure~\ref{fig: Sample paths 1} shows a few sample paths from the simulations of the interaction of Gaussian pulsons for the case with $\bar p_1=4$ and $\beta=4$. The simulations for $\bar p_1=8$ and $\bar p_1=2$, as well as the simulations for peakons, gave qualitatively similar results. The most striking feature is that the faster pulson/peakon may in fact cross the slower one. In the deterministic case one can show that the faster pulson can never pass the slower one---they just exchange their momenta. The proof relies on the fact that both the Hamiltonian and total momentum are preserved (see \cite{FringerHolmGeodesic}, \cite{HolmGMS}). In our case, however, the Hamiltonian \eqref{qp-Ham} is not preserved due to the presence of the time-dependent noise (see Figure~\ref{fig: Hamiltonian}), which allows much richer dynamics of the interactions. This may find interesting applications in landmark matching---see the discussion in Section~\ref{sec: Summary and future work}.

Looking at Figure~\ref{fig: Sample paths 1} we also note that our variational integrator exactly preserves the total momentum, as expected. Figure~\ref{fig: Mean paths 1} depicts the mean solution for Gaussian pulsons with the initial condition $\bar p_1=4$ for different values of the noise intensity $\beta$. We see that for small noise the mean solution resembles the deterministic one, but as the parameter $\beta$ is increased, the mean solution represents two pulsons passing through each other with increasingly less interaction. We study the probability of crossing in more detail in Section~\ref{sec: Probability of crossing}.

We observed that pulsons may cross even when they have the same initial momentum (see Figure~\ref{fig: Sample paths 2}). In the deterministic case they would just propagate in the same direction, retaining their relative distance. Nevertheless, the mean solution (see Figure~\ref{fig: Mean paths 2}) does not show any crossing. 

\subsection{Probability of crossing}
\label{sec: Probability of crossing}

\begin{figure}
	\centering
		\includegraphics[width=0.8\textwidth]{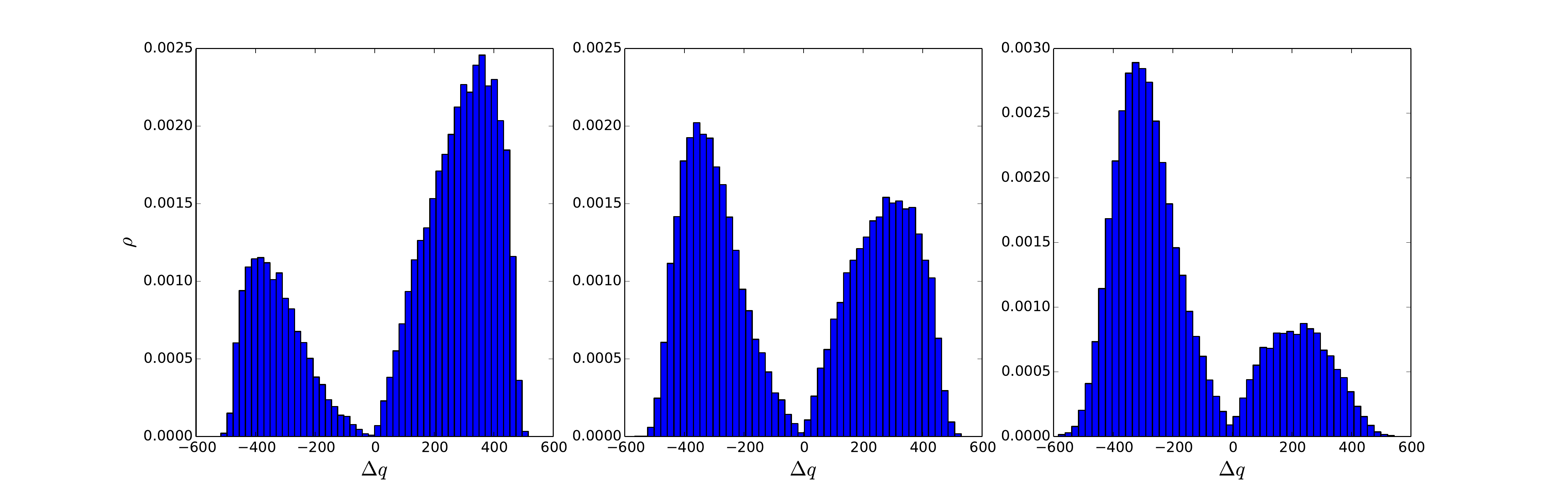}
		\caption{Numerical probability density $\rho$ of the distance $\Delta q(t)=q_2(t)-q_1(t)$ at time $t=100$ for Gaussian pulsons for the simulations with $\bar p_1=4$ (cf. Figure~\ref{fig: Mean paths 1}). Results for three example choices of the parameter $\beta$ are presented: $\beta=1.5$ ({\it left}), $\beta=2.5$ ({\it middle}), and $\beta=4.5$ ({\it right}). }
		\label{fig: Distribution Density 1}
\end{figure}

\begin{figure}
	\centering
		\includegraphics[width=0.8\textwidth]{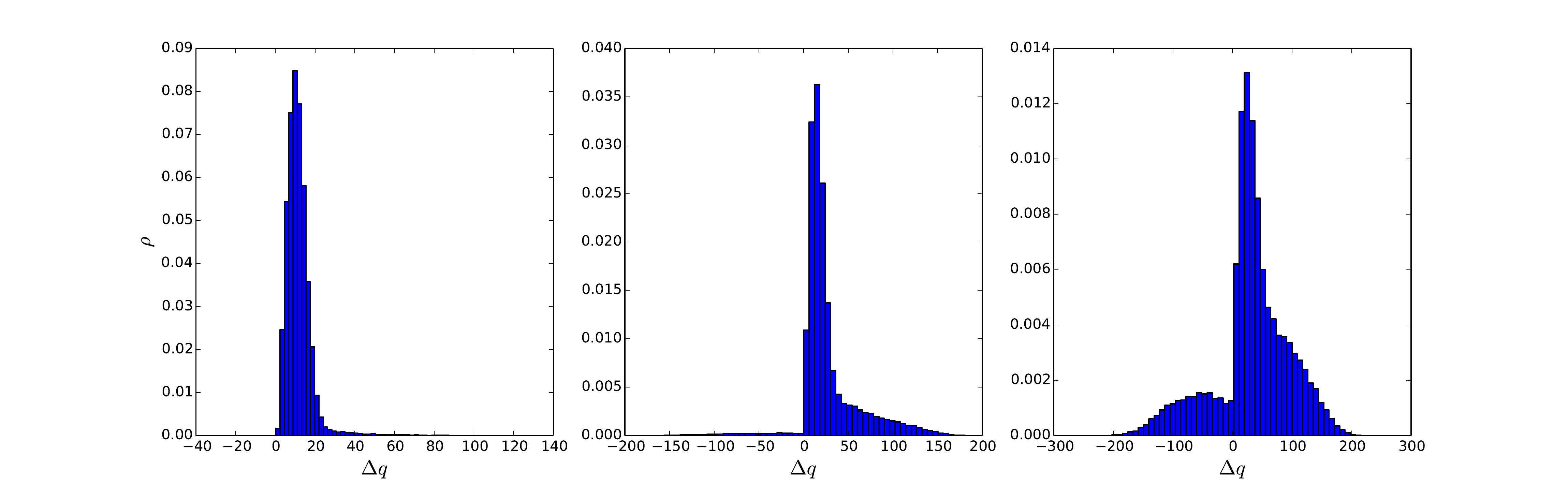}
		\caption{Numerical probability density $\rho$ of the distance $\Delta q(t)=q_2(t)-q_1(t)$ at time $t=100$ for Gaussian pulsons for the simulations with $\bar p_1=1$ (cf. Figure~\ref{fig: Mean paths 2}). Results for three example choices of the parameter $\beta$ are presented: $\beta=0.5$ ({\it left}), $\beta=1$ ({\it middle}), and $\beta=2$ ({\it right}). }
		\label{fig: Distribution Density 2}
\end{figure}

\begin{figure}
	\centering
		\includegraphics[width=0.8\textwidth]{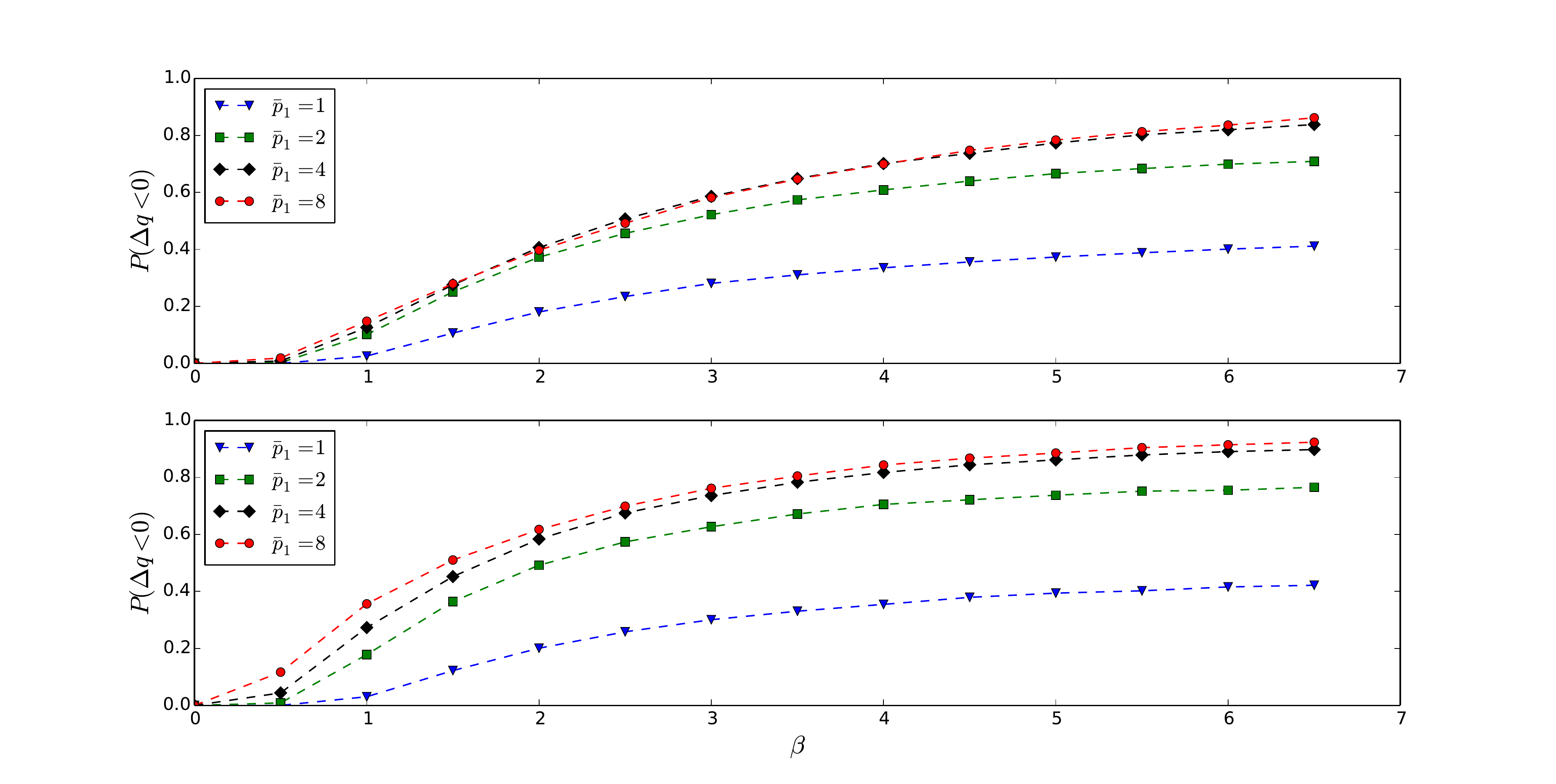}
		\caption{The probability of crossing, that is, the probability that $q_2(t)<q_1(t)$ at time $t=100$, as a function of the parameter $\beta$ for Gaussian pulsons ({\it top}) and peakons ({\it bottom}). }
		\label{fig: Crossing Probability}
\end{figure}

We studied in more detail the distance between the pulsons $\Delta q(t) = q_2(t)-q_1(t)$ at the end of the simulation, that is, at time $t=100$. Figure~\ref{fig: Distribution Density 1} presents the experimental probability density function of $\Delta q$ computed for Gaussian pulsons with $\bar p_1=4$. The density appears to have two local maxima, with the global maximum shifting from positive to negative values of $\Delta q$ as the noise intensity $\beta$ is increased. The simulations for $\bar p_1=8$ and $\bar p_1=2$, as well as the simulations for peakons, gave qualitatively similar results. Figure~\ref{fig: Distribution Density 2} depicts analogous results for the case of Gaussian pulsons with $\bar p_1=1$. In this case the density function also has two local maxima, but the global maximum never shifts to negative values of $\Delta q$. The probability of crossing as a function of the noise intensity $\beta$ is depicted in Figure~\ref{fig: Crossing Probability}. We see that this probability seems to approach unity for the simulations with $\bar p_1 >1$, and 0.5 for $\bar p_1 = 1$.

\subsection{First crossing time}
\label{sec: First crossing time}

\begin{figure}[h!]
	\centering
		\includegraphics[width=0.7\textwidth]{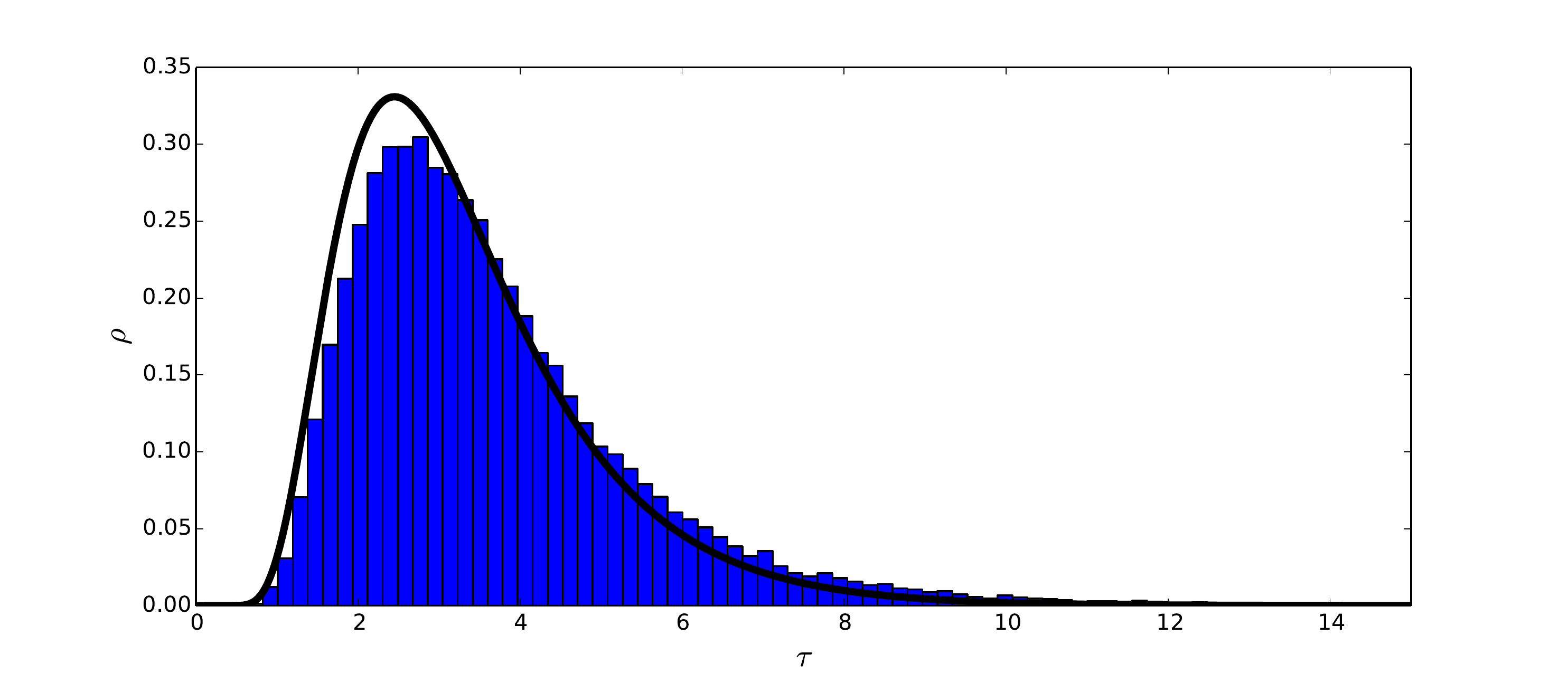}
		\caption{ The blue histogram presents the experimental probability density of the first crossing time $T_c$ for Gaussian pulsons for the simulations with $\bar p_1=4$ and $\beta=2.5$. More precisely, this is the conditional probability density given that $T_c<\infty$, i.e., assuming the pulsons do cross, the integral $\int_a^b \rho(\tau)\, d\tau$ yields the probability that the first crossing occurs at time $T_c \in [a,b]$. The black line depicts the inverse Gaussian distribution \eqref{eq: Inverse Gaussian distribution} with the parameters \eqref{eq: Mean and shape parameter}.  }
		\label{fig: Distribution density for the first crossing time}
\end{figure}
\begin{figure}[h!]
	\centering
		\includegraphics[width=0.9\textwidth]{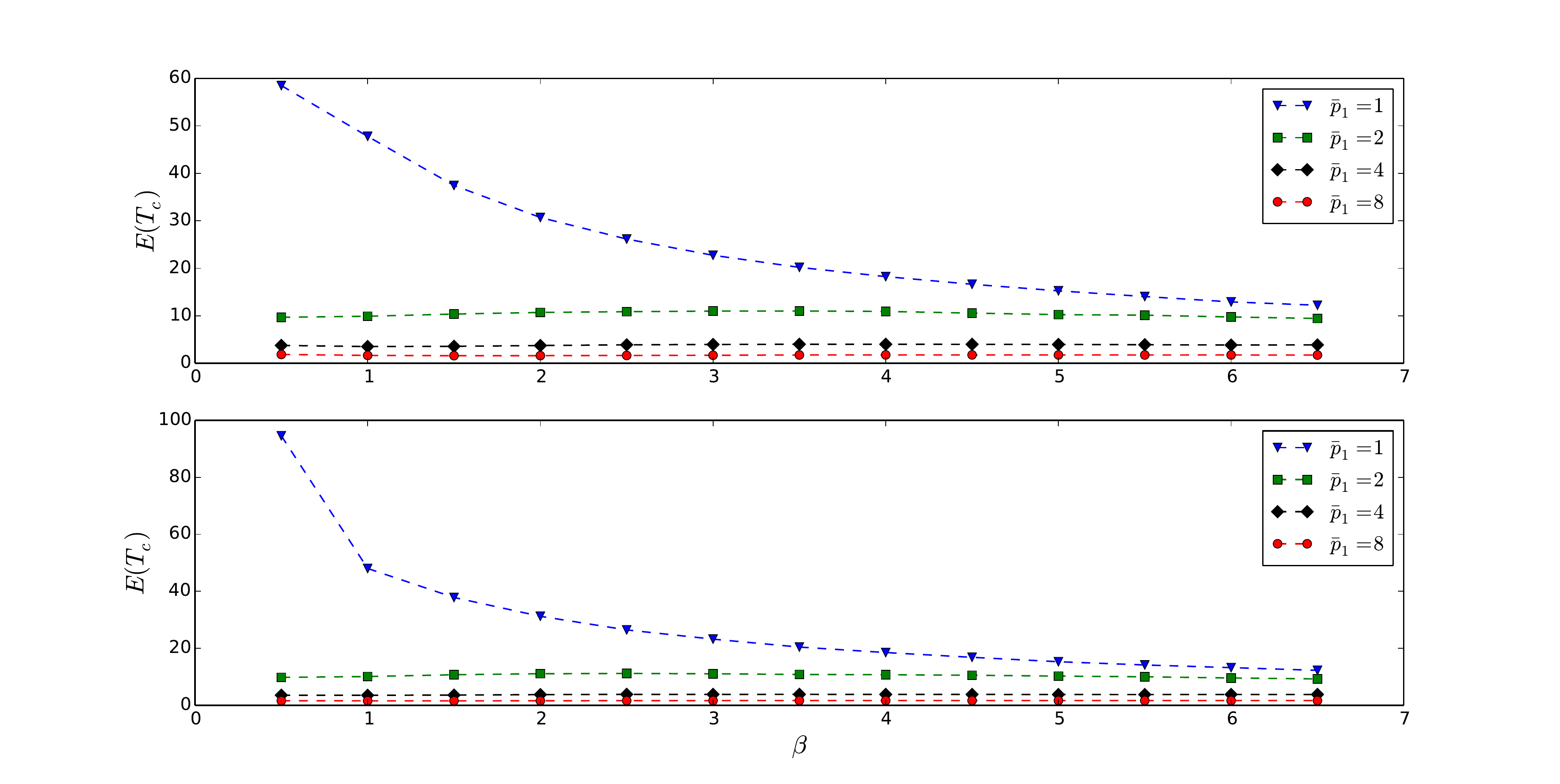}
		\caption{The mean first crossing time $E(T_c)$ as a function of the parameter $\beta$ for Gaussian pulsons ({\it top}) and peakons ({\it bottom}). More precisely, this is the conditional expectation $E(T_c | T_c<\infty)$ given that the pulsons do cross (i.e., $T_c<\infty$). }
		\label{fig: Mean first crossing time}
\end{figure}
It might be of interest to investigate the earliest time when the pulsons cross $T_c=\inf \{t>0: \Delta q(t)=0 \}$---let us call it the first crossing time (also known as the \emph{first exit time} or the \emph{hitting time}; see \cite{GardinerStochastic}, \cite{KloedenPlatenSDE}). Assume $T_c=+\infty$ if no crossing occurs. We can approximate the probability density of $T_c$ using the single-pulson solution \eqref{eq: Solution of the stochastic Hamiltonian equations for one pulson} and the asymptotic property \eqref{eq: Asymptotic solution to the F-P equation for two pulsons}. The two pulsons are initially far from each other, so we have

\begin{equation}
\label{eq: Asymptotic Delta q}
\Delta q(t) \approx \Delta \bar q - (\bar p_1 -\bar p_2) t + \beta W(t),
\end{equation}

\noindent
that is, $\Delta q(t)$ is approximated by a Brownian motion starting at $\Delta \bar q = \Delta q(0)>0$ with the drift $\bar p_1 -\bar p_2$. Assuming that $\bar p_1>\bar p_2$, the probability density of $T_c$ is given by the inverse Gaussian distribution $T_c \sim IG(\mu,\lambda)$ with the mean $\mu$ and shape parameter $\lambda$, respectively,

\begin{equation}
\label{eq: Mean and shape parameter}
\mu = \frac{\Delta \bar q}{\bar p_1 -\bar p_2}, \qquad\qquad \lambda=\frac{\Delta \bar q^2}{\beta^2},
\end{equation}

\noindent
where the density function is

\begin{equation}
\label{eq: Inverse Gaussian distribution}
\rho_{\text{IG}}(\tau) = \sqrt{\frac{\lambda}{2 \pi \tau^3}} \exp{\frac{-\lambda(\tau-\mu)^2}{2 \mu^2 \tau} }.
\end{equation}

\noindent
An example of a conditional probability density function of $T_c$ is depicted in Figure~\ref{fig: Distribution density for the first crossing time}. It shows very good agreement between the experimental and approximate theoretical distributions. The corresponding density functions for all other simulations look qualitatively similar, differing in the mean, variance, etc. When $\bar p_1 \longrightarrow \bar p_2^+$, then $\mu\longrightarrow +\infty$ and the inverse Gaussian distribution \eqref{eq: Inverse Gaussian distribution} tends to the L\'evy distribution, whose mean is infinite. However, we observed that for the simulations with $\bar p_1=\bar p_2=1$ the first crossing time is still well-approximated by the inverse Gaussian distribution with the shape parameter $\lambda$ as in \eqref{eq: Mean and shape parameter} and mean $\mu$ which seems to asymptotically decrease with the noise intensity $\beta$. The conditional mean crossing time $E(T_c | T_c<\infty)$ (given that a crossing occurs) as a function of the noise intensity $\beta$ is depicted in Figure~\ref{fig: Mean first crossing time}. The mean crossing time shows minor variations for the simulations with $\bar p_1 > 1$ and agrees well with \eqref{eq: Mean and shape parameter}, while it appears to asymptotically decrease for the simulations with $\bar p_1 = 1$.

\subsection{Noise screening}
\label{sec: Noise screening}

\begin{figure}[h!]
	\centering
		\includegraphics[width=0.7\textwidth]{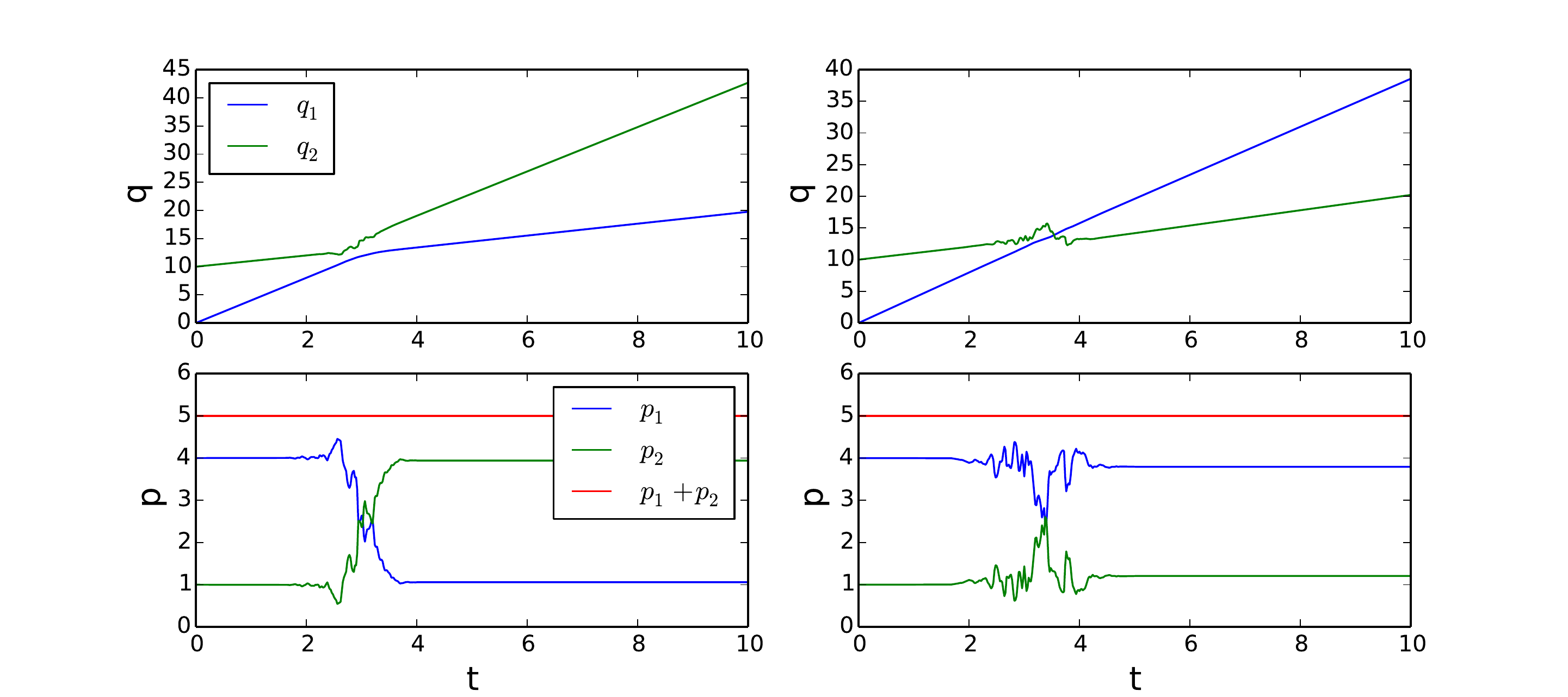}
		\caption{ Example numerical sample paths for Gaussian pulsons for the simulations with the initial conditions $\bar q_1 = 0$, $\bar p_1 = 4$, $\bar q_2 = 10$, and $\bar p_2 =1$, and the stochastic potential \eqref{eq: Noise screening stochastic potential} with the parameters $\beta=4$ and $\gamma=4$. The positions are depicted in the plots in the upper row, and the corresponding momenta are shown in the plots in the lower row. }
		\label{fig: Noise screening sample paths}
\end{figure}
In the numerical experiments described above we observed that the presence of noise causes pulsons to cross with a non-zero probability. The functions $q_1(t)$, $p_1(t)$, $q_2(t)$ and $p_2(t)$ define a transformation of the real line through \eqref{m-def-thm}. In the deterministic case this transformation is a diffeomorphism, but not when noise is added, since the crossing of pulsons introduces topological changes in the image of the real line under this transformation. This may be of interest in image matching, as in \cite{TrVi2012}, when one would like to construct a deformation between two images which are not exactly diffeomorphic. However, with that application in mind, one may want to restrict the stochastic effects only to the situation when two pulsons get close to each other. This can be obtained by applying the stochastic potential

\begin{equation}
\label{eq: Noise screening stochastic potential}
h(q,p) = \beta p_2 e^{-\frac{(q_2-q_1)^2}{\gamma}}.
\end{equation}

\noindent
The parameter $\beta \geq 0$ adjusts the noise intensity, just as before, while the parameter $\gamma>0$ controls the range over which the stochastic effects are non-negligible. We performed a few simulations with this stochastic potential. Since this potential is nonlinear, the integrator \eqref{eq: Stochastic Variational Integrator} is not applicable here. Instead, we used the stochastic symplectic midpoint rule (see \cite{MilsteinRepin}). A few sample paths are depicted in Figure~\ref{fig: Noise screening sample paths}. Note that this stochastic potential is translation-invariant, so the total momentum is preserved.

\subsection{Restriction to parametric noise and additive noise in the momentum equation}
\label{eq: Additive noise in the momentum equation}

\begin{figure}
	\centering
		\includegraphics[width=0.6\textwidth]{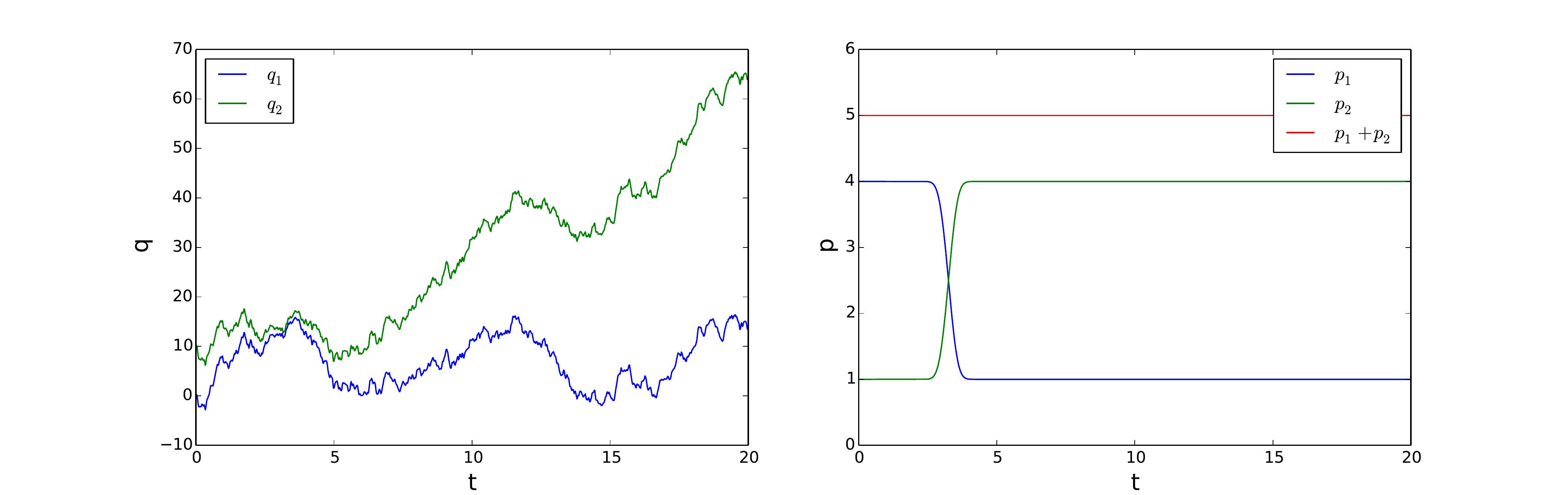}
		\caption{ Example numerical sample paths for Gaussian pulsons for the simulations with the initial conditions $\bar q_1 = 0$, $\bar p_1 = 4$, $\bar q_2 = 10$, and $\bar p_2 =1$, and the stochastic potential $h(q,p)=\beta (p_1+p_2)$ with the parameter $\beta=4$.}
		\label{fig: Parametric noise sample paths}
\end{figure}

\begin{figure}
	\centering
		\includegraphics[width=0.8\textwidth]{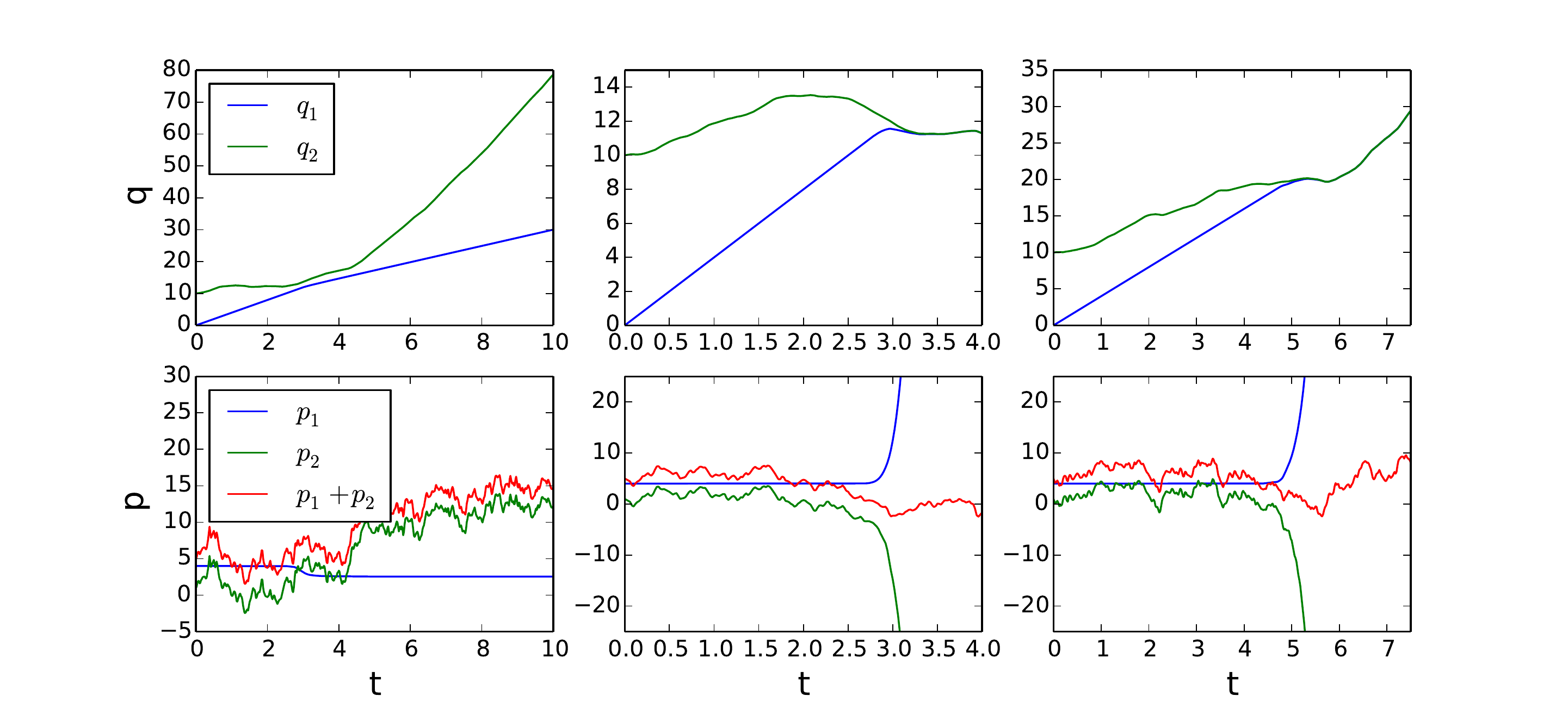}
		\caption{ Example numerical sample paths for Gaussian pulsons for the simulations with the initial conditions $\bar q_1 = 0$, $\bar p_1 = 4$, $\bar q_2 = 10$, and $\bar p_2 =1$, and the stochastic potential $h(q,p)=\beta q_2$ with the parameter $\beta=4$. The positions are depicted in the plots in the upper row, and the corresponding momenta are shown in the plots in the lower row. }
		\label{fig: Additive noise in q sample paths}
\end{figure}

Interestingly, crossing of pulsons does not seem to occur for the case of parametric stochastic deformation with the restriction $\varphi_{ia}(q) = \xi_i(q_a)$ as in Corollary~\ref{Lemma-m-eqn-peakon}. We ran numerical experiments for the potential $h(q,p)= \beta (p_1+p_2)$, which has the form as in Corollary~\ref{Lemma-m-eqn-peakon} with $\xi(x) = \beta$, but observed no interpenetration (see Figure~\ref{fig: Parametric noise sample paths}). This is consistent with our observation in Section~\ref{sec: Two-pulson dynamics with P-SD} and the fact that pulsons never cross in the deterministic case. We also did not observe crossing when the stochastic potential is independent of $p$. For instance, we performed simulations with the potential $h(q,p) = \beta q_2$. Such a potential results in additive noise in the momentum equation in \eqref{SEP-eqns-thm-qp} only, as in \cite{TrVi2012}. A few sample paths are depicted in Figure~\ref{fig: Additive noise in q sample paths}. Note that in this case the total momentum is not preserved, since $h(q,p)$ is not translationally invariant. In many cases the pulsons would asymptotically approach each other, but never pass. We observed similar behavior for the (translationally invariant) potential $h(q,p)=\beta \exp (-(q_1-q_2)^2 / \gamma)$ with $\beta, \gamma >0$.

\subsection{Convergence tests}
\label{sec: Convergence test}

\begin{figure}
	\centering
		\includegraphics[width=\textwidth]{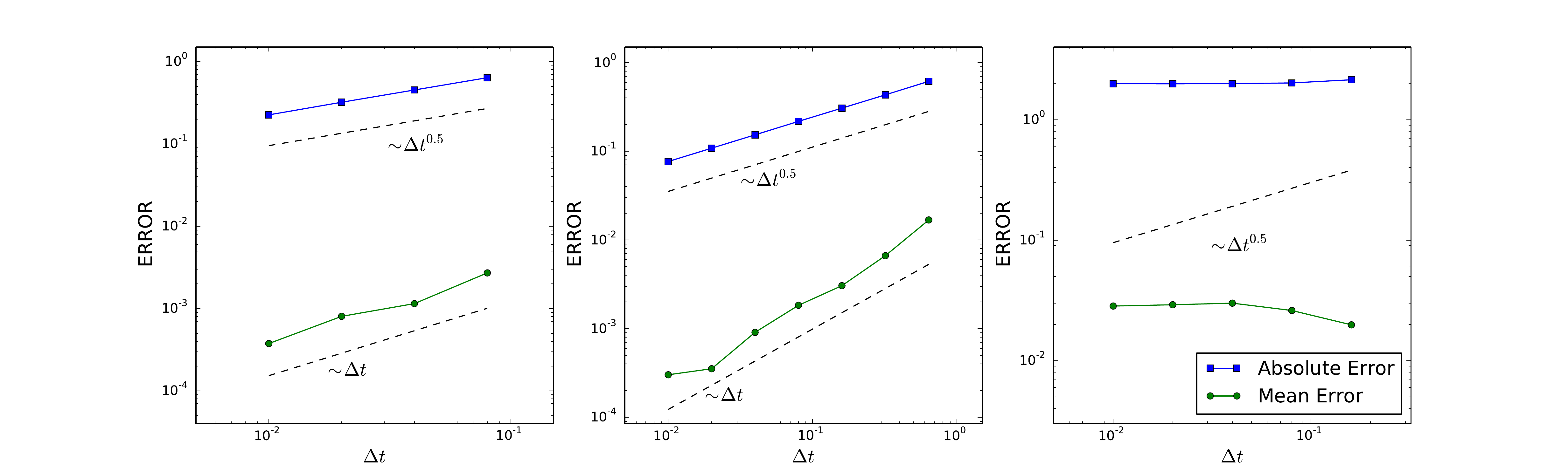}
		\caption{ Dependence of the absolute and mean errors on the time step for the single Gaussian pulson (\emph{left}), two point vortices (\emph{center}) and Kubo oscillator (\emph{right}). }
		\label{fig: Convergence plot}
\end{figure}

\subsubsection{Single pulson}
In order to test the convergence of the numerical algorithm \eqref{eq: Stochastic Variational Integrator} we performed computations for $N=1$ Gaussian pulson subject to one-dimensional (i.e., $M=1$) Wiener process with the stochastic potential $h(q,p)=\beta p$ (cf. Section~\ref{sec: Single-pulson dynamics}). Simulations with the initial conditions $\bar q=0$, $\bar p =4$ and the noise intensity $\beta=4$ were carried until the time $T=2$ for a number of decreasing time steps $\Delta t$. In each case 50,000 sample paths were generated. Let $z_{\Delta t}(t) = (q_{\Delta t}(t), p_{\Delta t}(t) )$ denote the numerical solution. We used the exact solution \eqref{eq: Solution of the stochastic Hamiltonian equations for one pulson} as a reference for computing the absolute error $E(|z_{\Delta t}(T)-z_\beta(T)|)$ and the mean error $| E(z_{\Delta t}(T))-E(z_\beta(T)) |$, where $z_\beta(t) = (q_\beta(t), p_\beta(t) )$. The dependence of these errors on the time step $\Delta t$ is depicted in Figure~\ref{fig: Convergence plot}. We verified that our algorithm has strong order of convergence $0.5$, and weak order of convergence 1.

\subsubsection{Two planar point vortices}

We performed a similar test for $N=2$ planar point vortices subject to a one-dimensional (i.e., $M=1$) Wiener process. The system is described by

\begin{equation}
H(q,p)=-\frac{1}{4 \pi} \Gamma_1 \Gamma_2 \log \bigg[ \Big( \frac{q_1}{\sigma_1}-\frac{q_2}{\sigma_2} \Big)^2 + \Big( \frac{p_1}{\lambda_1}-\frac{p_2}{\lambda_2} \Big)^2 \bigg], \qquad\quad h(q,p) = \beta \lambda_1 q_1 + \gamma \sigma_1 p_1 + \beta \lambda_2 q_2 + \gamma \sigma_2 p_2, 
\end{equation}

\noindent
where $\Gamma_1, \Gamma_2$ are the circulations of the vortices, $\sigma_i = \sqrt{|\Gamma_i|}\,\text{sgn}\,\Gamma_i$, $\lambda_i = \sqrt{|\Gamma_i|}$ are scaling factors, $\beta, \gamma$ are the noise intensities, and $q_i, p_i$ denote the $x$- and $y$-coordinate of the $i$-th vortex, respectively (see \cite{Flandoli}, \cite{Newton}). Simulations for $\Gamma_1=2$, $\Gamma_2=1$ with the initial conditions $\bar q_1 = \sigma_1 R_1$ , $\bar q_2 = \sigma_2 R_2$, $\bar p_1=\bar p_2=0$, where $R_1=\Gamma_2/(\Gamma_1+\Gamma_2)$, $R_2=-\Gamma_1/(\Gamma_1+\Gamma_2)$, and the noise intensities $\beta=\gamma=0.5$ were carried out until the time $T=6.4$ for a number of decreasing time steps $\Delta t$. In each case 50000 sample paths were generated. We used the exact solution (see \cite{Flandoli}, \cite{Newton})

\begin{align}
q_1(t) &= \sigma_1 R_1 \cos \omega t + \gamma \sigma_1 W(t), \qquad p_1(t) = \lambda_1 R_1 \sin \omega t - \beta \lambda_1 W(t), \\
q_2(t) &= \sigma_2 R_2 \cos \omega t + \gamma \sigma_2 W(t), \qquad p_2(t) = \lambda_2 R_2 \sin \omega t - \beta \lambda_2 W(t),
\end{align}

\noindent
where $\omega = (\Gamma_1+\Gamma_2)/(2 \pi)$, as a reference for computing the absolute and mean errors (see Figure~\ref{fig: Convergence plot}). We verified that our algorithm has strong order of convergence $0.5$, and weak order of convergence 1.

\subsubsection{Kubo oscillator}

To demonstrate that the integrator \eqref{eq: Stochastic Variational Integrator} fails to converge for multiplicative noise, we performed computations for the Kubo oscillator, which is defined by $H(q,p)=p^2/2+q^2/2$ and $h(q,p)=\beta(p^2/2+q^2/2)$, where $\beta$ is the noise intensity (see \cite{MilsteinRepin}). The exact solution is given by 

\begin{equation}
q(t)=\bar p \sin(t+\beta W(t)) + \bar q \cos(t+\beta W(t)), \qquad\quad p(t)=\bar p \cos(t+\beta W(t)) - \bar q \sin(t+\beta W(t)).
\end{equation}

\noindent
A similar convergence test with $\bar q=0$, $\bar p=4$, $\beta=1$, and $T=6.4$ revealed that the integrator \eqref{eq: Stochastic Variational Integrator} failed to converge, although the errors remained bounded (see Figure~\ref{fig: Convergence plot}).

\section{Summary}
\label{sec: Summary and future work}
We have seen in Section \ref{sec: ParametricStochasticDeformations} that the finite-dimensional peakon solutions for the EPDiff partial differential equation in one spatial dimension persist under both parametric stochastic deformation (P-SD) and canonical Hamiltonian stochastic deformations (CH-SD) of the EPDiff variational principle. Being both finite-dimensional and canonically Hamiltonian, the dynamics of the peakon solution set for EPDiff admits the entire range of CH-SD in the sense of \cite{Bi1981,LaCa-Or2008}, which includes P-SD but can be more general. Therefore, the peakon solution set offers a finite-dimensional laboratory for comparing the effects of P-SD and CH-SD on the stochastically deformed EPDiff SPDE solution behaviour. In fact, as it turns out, the peakon solution set for EPDiff offers a particularly sensitive assessment of the effects of stochasticity on finite-dimensional solutions of SPDE. In Section \ref{sec: ParametricStochasticDeformations}, we took advantage of the flexibility of CH-SD to study stochastic peakon-peakon collisions in which noise was introduced into \emph{only one} of the peakon position equations (rather than symmetrically into both of the canonical position equations, as occurs with P-SD), while at the same time not introducing any noise into either of the corresponding canonical momentum equations. The precision and flexibility of the CH-SD approach to stochastic peakon-peakon collision dynamics revealed that its asymmetric case with noise in only one canonical position equation allows the soliton-like singular peakon and pulson solutions of EPDiff to interpenetrate and change order on the real line, although this is not possible for the diffeomorphic flow represented by the solutions of the unperturbed deterministic EPDiff equation. This crossing of peakon paths was observed and its statistics were studied in detail for CH-SD in numerical experiments in Section~\ref{sec: Numerical experiments}. In contrast, crossing of peakon paths was \emph{not} observed for the corresponding P-SD simulations in which the noise enters symmetrically in both position equations. Crossing of peakon paths was also not observed when stochasticity was added only in the canonical momentum equations, as studied in \cite{TrVi2012}. 

Thus, for the deterministic EPDiff, adding stochasticity of \emph{constant amplitude} with either CH-SD of P-SD to a finite dimensional invariant solution set has been found to produce different SDE solution behaviour. Here, the difference has introduced the possibility of a topological change in the order of points moving on a line in the CH-SD approach, while no such change in topology seems to be available via the P-SD subclass. One can also imagine that changing the level of noise in the P-SD EPDiff SPDE could change the number of peaks or pulsons; a feature which would not have been available if the level of noise were changed after the reduction to a fixed $N$-peakon solution sector. The intriguing idea of creation of singular EPDiff solutions by P-SD noise in the SPDE is under current investigation. 

The investigation of stochastic EPDiff in this paper has raised and illustrated a potentially important issue. The need for assessing the validity of approximating the stochastic solution behaviour of nonlinear SPDE by SDE obtained from adding noise to finite-dimensional projections (or discretisations) of the solutions is likely to be encountered quite often in many other circumstances and can be expected to be of frequent future concern. In particular, this issue is likely to occur in considerations of model error in stochastic data assimilation. For example, the endeavours of computational anatomy must face this issue in the use of the singular solutions of EPDiff known as \emph{landmarks} in the task of registration of noisy images \cite{TrVi2012}. This issue of the validity of stochastic deformations of finite-dimensional approximations of evolutionary PDE is a challenge for continuing research in P-SD of EPDiff, as well as in stochastic deformations of more general continuum equations, such as Euler's equations for an ideal fluid, or the Navier-Stokes equations for a viscous fluid. The present work has shown that the introduction of even constant stochasticity into the equations of motion for \emph{exact solutions} (peakons and pulsons, or landmarks, for EPDiff) can produce unexpected changes in topology of the solution in one dimension. The corresponding introduction of stochasticity into the equations of motion for finite-dimensional \emph{approximations} such as discretisation, or projections of the solutions of nonlinear evolutionary PDE may result in  other surprises. 

\subsection{Acknowledgements} 
We are very grateful for the encouragement of the many people who took the time to discuss these matters with us, or comment on drafts, especially our friends and colleagues N. Bou-Rabee, A. Castro, M. Chekroun, C. J. Cotter, D. Crisan, M. O. Hongler, J. P. Ortega and H. Owhadi. However, as usual, any mistakes belong to the authors. This work was partially supported by the European Research Council Advanced Grant 267382 FCCA.

%


\end{document}